\newcommand{\centercell}[1]{\multicolumn{1}{c}{#1}}
\newcommand{\blue}[1]{{\textcolor{black}{#1}}}
\newcommand{\single}[1]{{\textcolor{blue!80!green!80!black}{#1}}}
\newcommand{\multi}[1]{{\textcolor{blue!40!green!80!black}{#1}}}
\newcolumntype{Y}{>{\centering\arraybackslash}X} 
\newcommand{\etal}{et al.\xspace}
\newcommand{\eps}{\varepsilon}
\newcommand{\new}[1]{{#1}}
\newtheorem {observation}[theorem] {Observation}
\begin{document}
\title{Terrain prickliness: theoretical grounds\\ for high complexity viewsheds\thanks{A preliminary version of this paper appeared in Proc. 11th International Conference on Geographic Information Science (GIScience'21), Part II, 10:1-10:16, 2021.}}
%
%
\author{Ankush Acharyya\inst{1} \and
Maarten L\"{o}ffler\inst{2,3} \and 
Gert G.T. Meijer\inst{4} \and 
Maria Saumell\inst{5} \and 
Rodrigo I. Silveira\inst{6} \and 
Frank Staals\inst{2} }
\authorrunning{Acharyya et al.}
%
%
\institute{Department of Computer Science and Engineering, National Institute of Technology, Durgapur 
\and
Dept. of Information and Computing Sciences, Utrecht University
\and
Department of Computer Science, Tulane University
\and
Academy of ICT and Creative Technologies, NHL Stenden University of Applied Sciences
\and 
Dept. of Theoretical Computer Science, Faculty of Information Technology, Czech Technical University in Prague 
\and 
Dept. de Matem\`{a}tiques, Universitat Polit\`{e}cnica de Catalunya
}

\maketitle   
%

\begin{abstract}
An important task in terrain analysis is computing
  \emph{viewsheds}. A viewshed is the union of all the parts of the
  terrain that are visible from a given viewpoint or set of
  viewpoints. The complexity of a viewshed can vary significantly
  depending on the terrain topography and the viewpoint position. In
  this work we study a new topographic attribute, the
  \emph{prickliness}, that measures the number of local maxima in a
  terrain from all possible angles of view. We show that the
  prickliness effectively captures the potential of 2.5D TIN terrains to have high complexity viewsheds.
  \new{We present
  optimal and (under standard assumptions) near-optimal 
  algorithms to compute it for 1.5D and 2.5D TIN terrains, respectively, and efficient approximate algorithms for raster DEMs. }
  We validate the usefulness of the prickliness attribute with experiments in a large set of real terrains.

\keywords{Digital elevation model \and Triangulated irregular network \and Viewshed complexity.}
\end{abstract}

\section{Introduction}


Digital terrain models represent part of the earth's surface, and are used to solve a variety of 
problems in Geographic Information Science (GIS).
An important task 
is viewshed analysis: determining which parts of a terrain are visible from certain terrain locations.
Two points $p$ and $q$ on or above a terrain are mutually \emph{visible} if the line of sight defined by line segment $\overline{pq}$ does not intersect the interior of the terrain.
Given a \emph{viewpoint} $p$, the \emph{viewshed} of $p$ is the set of all terrain points that are visible from $p$. Similarly, the viewshed of a set of viewpoints $P$ is defined as the set of all terrain points that are visible from \emph{at least} one viewpoint in $P$.
Viewsheds 
are useful, for example, in evaluating the visual impact of potential constructions~\cite{Danese2011short}, analyzing the coverage of an area by fire watchtowers~\cite{klms-papgpt-14}, or measuring the scenic beauty of a landscape~\cite{CHAMBERLAIN201313,SCHIRPKE20131}. 

\subsection{Discrete and continuous terrain representations}
Two major terrain representations are prevalent in GIS.
The simplest and most widespread is the raster, or \emph{digital elevation model} (DEM), consisting of a rectangular grid where each cell stores an elevation.\footnote{For the sake of simplicity, in this paper we use DEM to denote the raster version of a DEM.}
The main alternative is a vector representation, or \emph{triangulated irregular network} (TIN), where a set of irregularly spaced elevation points are connected into a triangulation. 
A TIN can be viewed as a continuous $xy$-monotone polyhedral surface in $\mathbb{R}^3$.
\new{
Following standard terminology, in this paper we will refer to such terrain representations as \emph{2.5D terrains}.
The \emph{2.5D} part (as opposed to \emph{3D}) arises from the fact that what is represented is an $xy$-monotone surface in 3D, but it is not fully three-dimensional (since, for instances, caves and overhangs cannot be represented in these models).}

A viewshed in a DEM is the set of all raster cells that are visible from at least one viewpoint.
In contrast, a viewshed in a TIN is the union of all
\emph{parts of triangles} that are visible from at least one
viewpoint, so it can be seen as a set of polygons.

DEMs are simpler to analyze than TINs and facilitate most analysis tasks.
The main advantage of TINs is that they require less storage space.
Both models have been considered extensively in the literature for viewshed analysis, see Dean~\cite{Dean97} for a complete comparison of both models in the context of forest viewshed.
Some studies suggest that TINs can be superior to DEMs in viewshed computations~\cite{Dean97}, but experimental evidence is inconclusive~\cite{RD-CausesError-07}.
This is in part due to the fact that the viewshed algorithms used in~\cite{RD-CausesError-07} do not compute the visible part of each triangle, but only attempt to  determine whether each triangle is completely visible.
This introduces an additional source of error and does not make use of all the information contained in the TIN.

\begin{figure}[h]
\begin{center}
    \includegraphics[width=0.32\textwidth]{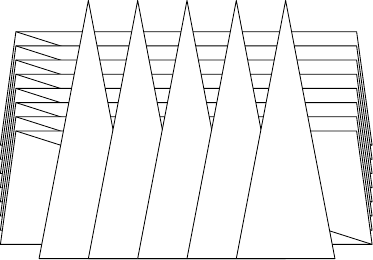}
  \end{center}

  \caption{Part of a TIN with a high-complexity viewshed. The viewpoint (not shown) is placed at the center of projection. The relevant triangles of the TIN are the ones shown, which define $n$ peaks and ridges. The viewshed in this case is formed by $\Theta(n^2)$ visible regions.}
  \label{fig:quadratic}
\end{figure}
 
\subsection{Viewshed complexity}
The algorithmic study of viewsheds focuses on two main aspects: the complexity of the viewsheds, and their efficient computation.
In this work, we are interested in their complexity. We use the information-theoretic meaning of ``complexity'': the complexity of an object is the number of bits needed to represent it in memory.
Therefore, in the case of TINs, viewshed complexity is defined as the total number of vertices of the polygons that form the viewshed.
In the case of DEMs, there are several ways to measure viewshed complexity. To facilitate comparison between TIN and DEM viewsheds, we convert the visible areas in the raster viewshed to polygons, and define the viewshed complexity as the total number of vertices in those polygons.
%
%
%
%
A typical high-complexity viewshed construction for a TIN is shown schematically in Fig.~\ref{fig:quadratic}, where one viewpoint would be placed at the center of projection, and both the number of vertical and horizontal triangles is $\Theta(n)$, for $n$ terrain vertices.
The vertical peaks form a grid-like pattern with the horizontal triangles, leading to a viewshed with $\Theta(n^2)$ visible triangle pieces.



While a viewshed can have high complexity, this is expected to be uncommon in real terrains~\cite{BergHT10}.
There have been attempts to define theoretical conditions for a (TIN) terrain 
that guarantee, among others, that viewsheds cannot be that large.
For instance, Moet \etal~\cite{MOET200848} showed that if terrain triangles satisfy certain ``realistic'' shape conditions, viewsheds have $O(n \sqrt{n})$ complexity.
De Berg \etal~\cite{BergHT10} showed that similar conditions  guarantee worst-case expected complexity of $\Theta(n)$ when the vertex heights are subject to uniform noise.

\subsection{Viewsheds and peaks}


The topography of the terrain has a strong influence on the potential complexity of the viewshed. 
To give an extreme example, in a totally
concave 
terrain, the viewshed of any viewpoint will be the whole terrain, and has a trivial description. \blue{Intuitively, to obtain a high complexity viewshed as in Fig.~\ref{fig:quadratic}, one needs a large number of obstacles obstructing the visibility from the viewpoint, which 
requires a somewhat rough topography.}


In fact, it is well-established that viewsheds tend to be more complex in terrains that are more ``rugged''~\cite {Kim2004ViewpointLocations}.
This leads to the natural question of which terrain characteristics correlate with high complexity viewsheds.
Several topographic attributes have been proposed to capture different aspects of the roughness of a terrain, such as  the \emph{terrain ruggedness index}~\cite{Riley1999TRI}, the \emph{terrain shape index}~\cite{McNab1989TSI}, or the \emph{fractal dimension}~\cite{Mandelbrot1982}.
These attributes focus on aspects like the amount of elevation change between adjacent parts of a terrain, its overall shape, or the terrain complexity. However, none of them is specifically intended to capture the possibility to produce high complexity viewsheds, and there is no theoretical evidence for such a correlation.
Moreover, these attributes are locally defined,
and measure only attributes of the local neighborhood of one single point. 
While we can average these measures over the whole terrain,
given the global nature of visibility, it is unclear a priori whether such measures are suitable for predicting viewshed complexity.
We refer to Dong \etal~\cite{Dong2008} for a systematic classification of topographic attributes.

\begin{figure}[tb]
\begin{center}
    \includegraphics[trim=0 12 0 0,clip]{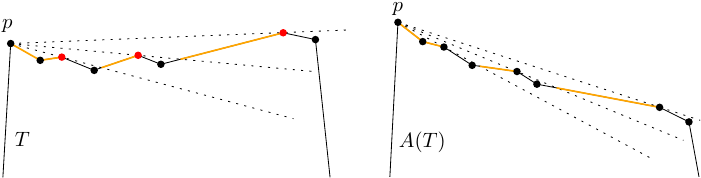}
  \end{center}

  \caption{Left: a TIN (in  $\mathbb{R}^2$)
with three peaks and one viewpoint ($p$), with a viewshed composed of three parts (visible parts shown orange). Right: transformation of the terrain with no peaks (other than $p$) but the same viewshed complexity. Dotted segments show lines of sight from $p$.}
  \label{fig:flattening}
\end{figure}

One very simple and natural global measure of the ruggedness of a terrain that is relevant for viewshed complexity is to simply count the number of \emph{peaks} (i.e., local maxima) in the terrain.
It has been observed that areas with higher elevation difference, and hence, more peaks, cause irregularities in viewsheds \cite {franklin,klms-papgpt-14}, and this idea aligns with our theoretical understanding: the quadratic example from Fig.~\ref{fig:quadratic} is designed by creating an artificial row of peaks, and placing a viewpoint behind them.
However, while it seems reasonable to use the peak count as complexity measure, there is no theoretical correlation between the number of
peaks and the viewshed complexity.
This is easily seen by performing a
simple trick: any terrain can be made arbitrarily flat by scaling it
in the $z$-dimension by a very small factor, and then it can be rotated slightly.
This results in a valid terrain without any peaks, but
retains the same viewshed complexity.
See Fig.~\ref{fig:flattening} for an example in $\mathbb{R}^2$.
In fact, viewshed complexity is
invariant under affine transformations (i.e.,~scalings,
  rotations, and translations) of the terrain: the application of any affine combination to the terrain and the viewpoints results in a viewshed of the same complexity.
  Hence, any measure that has provable correlation with it must be affine-invariant as well.
\blue{This is a common problem to establish theoretical guarantees on viewshed complexity, or to design features of ``realistic'' terrains in general~\cite{BergHT10,MOET200848}.}
In fact, it is easy to see that none of the terrain attributes mentioned above is affine-invariant.


\subsection {Prickliness}

In this work we propose a new topographic attribute:
the \emph{prickliness}.
The definition follows directly from the above observations: it counts the number of peaks in a terrain, but does so {\em for every possible affine transformation} of the terrain.
We first present a definition for TINs, and then we explain how the definition carries over to DEMs.


Let $T$ be a 
triangulated surface that is $xy$-monotone.
\new{A point $p$ on $T$ will be considered {\em concave} (resp., {\em convex}) if there exists a non-vertical plane through $p$ that leaves all neighboring points above it (resp., below it).
A vertex of $T$ will be called \emph{internal} if it is not on the boundary of the triangulation, i.e., if it is not incident to the unbounded face of the triangulation (considered in 2D).
}

Let $A$ be an affine transformation 
\new{from $\mathbb{R}^3$ to $\mathbb{R}^3$, defined as an invertible linear transformation followed by a translation (i.e., $A(x)=Mx+U$, for an invertible matrix $M \in \mathbb{R}^{3 \times 3}$ and $U \in \mathbb{R}^{3}$).}
Let $A(T)$ be the \new{polyhedral surface} obtained after applying $A$.
We define 
$m(A(T))$ to be the number of internal and convex vertices of $T$\footnote {We explicitly only count vertices that are already convex in the {\em original} terrain, since some affine transformations will transform local minima / concave vertices of the original terrain into local maxima.} that are local maxima in $A(T)$\footnote{See Section~\ref{sec:prelim} for a formal definition of local maxima}.
Let ${\cal A}(T)$ be the set of all affine transformations of $T$.

We define the {\em prickliness} of $T$, $\pi(T)$, to be the maximum number of local maxima over all transformations of $T$;
that is, $\pi(T) = \max_{A \in {\cal A(T)}} m(A(T))$.

%
%
We start by observing that, essentially, the prickliness considers all possible {\em directions} in which the number of local maxima are counted.
Let $\vec v$ be a vector in $\mathbb{R}^3$. Let $\pi_{\vec v} (T)$ be the number of internal and convex vertices of $T$ that are local maxima of $T$ in direction $\vec v$. 
Then, $\pi(T) = \max_{\vec v} \pi_{\vec v} (T)$ (see See Section~\ref{sec:prelim} for a proof).

\begin{figure}[tb]
\centering
\includegraphics[scale=0.73]{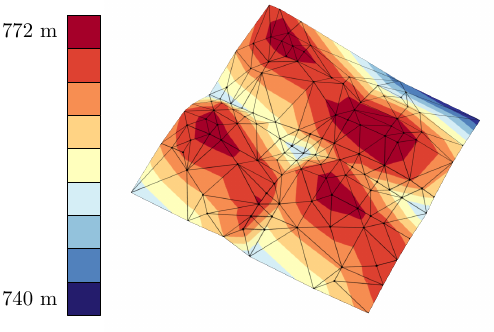}
\includegraphics[scale=0.73]{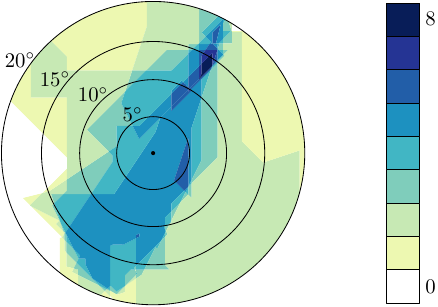}
\caption {(left) A TIN $T$, with triangulation edges shown in black, and elevation indicated using colors. 
(right) A visualization of the prickliness of $T$ as a function of the angles $(\theta, \phi)$ that define each direction (circles indicate contour lines for $\theta$); color indicates prickliness.
The maximum prickliness is $8$, attained at a direction of roughly $\theta=13^\circ$ and $\phi=60^\circ$ (north-east from the origin).}
\label {fig:example}
\end {figure}

  Using this observation, we reduce the space of all affine transformations to the 2-dimensional space of all directions in 3D.
  Since $T$ is a terrain, for any $\vec v$ with a negative $z$-coordinate we have $\pi_{\vec v} (T) = 0$ by definition, thus the interesting directions reduce to the points on the (positive) unit half-sphere.
  This provides a natural way to visualize the prickliness of a terrain.
  Each direction can be expressed using two angles $\theta$ and $\phi$
  (i.e., using spherical coordinates), where $\theta$ represents the
  polar angle and $\phi$ the azimuthal angle. Fig.~\ref {fig:example}
  shows a small terrain and the resulting prickliness, showing a
  projection of the half-sphere, where each point represents a
  direction, and its color indicates its prickliness.\footnote {Note that we
  specifically define prickliness to be the \emph{maximum} over all
  orientations rather than, say, the average over all
  orientations. Even for a terrain with high-complexity viewsheds like the one in
  Fig.~\ref{fig:quadratic}, the average number of peaks would still be
  relatively small since there are many orientations with a small
  number of peaks. Hence, such a definition would be unlikely to accurately capture the
  complexity of viewsheds on a terrain. \new{Notice also that some orientations might result in an object that is no longer a terrain, but considering these orientations seems necessary because the associated peaks might still produce high complexity viewsheds.}}



\paragraph{DEM terrains}
We note that all previous notions easily translate to DEMs. The
centers of the DEM cells can be seen as the {\em vertices} of the
terrain, and every internal vertex of the terrain has eight neighbors
given by the cell centers of the eight neighboring cells. Hence, in
the definitions for DEMs, the notion of {\em adjacent vertices} for
TINs is replaced by that of {\em neighbors}. 
Analogously to the case of TINs, we can define local maxima based on the height of the neighbors of a vertex (see Section~\ref{sec:prelim} for a formal definition of local maximum).
This gives an equivalent definition of $\pi(T)$ 
when
$T$ is a DEM.
%


In the case of DEMs, the definition of \emph{visibility} between two points needs to be adapted in order to compute viewsheds.
In our experimental work, we rely on the software routines provided by ArgGIS Pro, in particular on their \emph{Geodesic Viewshed} package.
In this package, the visibility between two points is decided as follows.
The line of sight between the two points is projected onto the spheroid representing the earth surface, resulting in a \emph{ground path}.
The ground path is sampled with \emph{step points} with consecutive distance proportional to the DEM cell-size. For each step point, it is checked if the terrain at that point obstructs the line of sight. The two points are considered visible if and only if no obstruction is found.
For more details on this we refer to ArcGIS Pro documentation~\cite{ArcGISViewshed}.

\paragraph{1.5D TIN terrains}
The most widespread terrain model in the computational geometry literature is the TIN (also called \emph{polyhedral terrain}). 
Visibility-related questions constitute an important family of problems concerning TIN terrains, but unfortunately some of these problems are quite difficult.
For this reason, terrains have also been defined and studied in one dimension less.
Standard TINs, as defined earlier in this paper, are \emph{2.5D TIN terrains}. 
If the dimension is reduced by one, we obtain \emph{1.5D terrains}, which can be seen as graphs of piece-wise linear univariate functions.
The simpler structure of 1.5D terrains makes it an interesting intermediate step towards the full understanding of problems in 2.5D TIN terrains.
For this reason, 1.5D terrains and---in particular---visibility problems on them, have been thoroughly studied during the last 15 years. 
In this work, first we study prickliness in 1.5D because it is conceptually easier, and then we investigate to what extent our results from the 1.5D case give insights into the 2.5D case. This is a common approach in the field.

More formally, a 1.5D TIN terrain is defined as an $x$-monotone polygonal line in $\mathbb{R}^2$. 
In this setting, a viewshed is composed of parts of terrain edges, and  the viewshed of one viewpoint can have a complexity that is linear on the  number of vertices of the terrain. Prickliness is defined as in the 2.5D case. Additionally, in this case it is also enough to consider all possible directions rather than all possible affine transformations (the proof of Observation~\ref{obs:pric} still applies unchanged). 
Notice that here directions are not vectors in $\mathbb{R}^3$ anymore, but vectors in $\mathbb{R}^2$. 


\subsection{Results and organization} 

The remainder of this paper is organized as follows. 

We start with a theoretical block, composed of Sections~\ref{sec:compl-15D}-\ref{sec:comp25D}. This block is only concerned with TIN terrains because this is the most interesting model from a theoretical point of view and the one for which viewshed complexity is better understood---recall that there is not a unique way to measure viewshed complexity for DEMs. We  study two aspects of prickliness for TIN terrains, first in the easier case of $1.5$D terrains, and then for $2.5$D terrains. In Sections~\ref{sec:compl-15D}-\ref{sec:compl-25D}, we investigate the correlation between the prickliness of a terrain and the maximum complexity of the viewshed of a viewpoint.  We show that the prickliness of a $1.5$D terrain and the viewshed complexity of a single viewpoint are not related: we give examples where one is constant and the other is linear.
In contrast, unlike other measures of terrain ruggedness, there is a provable correlation between prickliness and viewshed complexity in $2.5$D.
In Sections~\ref{sec:comp15D}-\ref{sec:comp25D}, we investigate the computational problem of calculating the prickliness of a terrain.
We show that the prickliness of a 1.5D or 2.5D TIN terrain can be computed in polynomial time. The algorithm for the 1.5D case is optimal, while the one for the 2.5D case is near-optimal \new{under standard assumptions}. We also provide  an efficient approximate algorithm for 2.5D DEM terrains, which is used in our experiments.

In the second block of the paper, composed of Sections~\ref {sec:Topographic_attributes}-\ref{section:discussion}, we report on experiments that measure the values of distinct topographic attributes (including the prickliness) of real (2.5D) terrains, and analyze their possible correlation with viewshed complexity. From the experiments, we conclude that prickliness provides such a correlation in the case of TIN terrains, while the other measures perform more poorly. The situation for DEM terrains is less clear.


\paragraph {Code}
Finally, we provide our code implementing two key algorithms for this
work: an algorithm to calculate the prickliness of a TIN terrain
(source code available from
\url{https://github.com/GTMeijer/Prickliness}; archived at
\href{https://archive.softwareheritage.org/swh:1:dir:c360f8c5b838bfe88910d26aad151dee69f69364;origin=https://github.com/GTMeijer/Prickliness;visit=swh:1:snp:f98118b47019c7c99d3f8ca36e76fdf56e9f5b39;anchor=swh:1:rev:7d1d9c1c9d76fee0d09beff2396006f8754e8613}{swh:1:dir:c360f8c5b838bfe88910d26aad151dee69f69364}),
and an algorithm to calculate the combined viewshed originating from a
set of multiple viewpoints (source code available from \url
{https://github.com/GTMeijer/TIN_Viewsheds}; archived at
\href{https://archive.softwareheritage.org/swh:1:dir:911b84528046c62ddd56c32905926748dd59791e;origin=https://github.com/GTMeijer/TIN_Viewsheds;visit=swh:1:snp:a61033758c7e06cfb67229206ddb2d2cfad7abd7;anchor=swh:1:rev:3228d926579e0631b255e7478f24963a508637b4}{swh:1:dir:911b84528046c62ddd56c32905926748dd59791e}).

\section {Preliminaries} \label{sec:prelim}

  We now review the precise definitions of the terms and concepts in this paper. 
  \new{Unless otherwise stated, all the concepts in this section apply to TINs and DEMs.}
  
  
  Intuitively, a {\em local maximum} or {\em peak} in a terrain is a vertex that is higher than its neighbours. However, the definition is somewhat tricky due to the possibility of multiple adjacent vertices at the same height. We therefore recall the following definition \new{from~\cite {thesisRodrigo}},
  which is common in the literature:
  
  \begin {definition}
    A {\em true local maximum} in a terrain is a vertex or connected group of vertices at the same height that are higher than all neighbors not in the group.
  \end {definition}
  
  While this definition correctly deals with the (degenerate) situation of multiple vertices at exactly the same height, it can be somewhat cumbersome to work with, and it makes it difficult to precisely define derivative concepts, such as the prickliness.
  
  Therefore, we consider the following alternative definition in this paper:
  
  \begin {definition}
    A {\em simple local maximum} in a terrain is a single vertex that is strictly higher than all its neighbors.
  \end {definition}

  Note that, \new{if no two neighboring vertices have the same height}, the two definitions are equivalent.
  However, in degenerate terrains \new{one can have a plateau of vertices all at the same height, which will never be a simple local maximum, but could be a true local maximum}.
  
  \begin {observation} \label {obs:tgs}
    In any given terrain, the number of true local maxima is greater than or equal to the number of simple local maxima.
  \end {observation}

  Now, let us consider the prickliness. Two definitions of local maximum give rise to two definitions of prickliness. 
  We will soon show that these are, in fact, equivalent (Lemma~\ref{lem:simpleistrue-1}), but until then,
  let us momentarily work with \emph{simple} local maxima, and thus let $\pi(T)$ denote the prickliness of $T$ using this definition of local maximum. Recall that $\pi_{\vec v} (T)$ denotes the number of internal and convex vertices of $T$ that are (simple) local maxima of $T$ in the direction of vector $\vec v$ in $\mathbb{R}^3$. 
  
  \begin {observation} \label{obs:pric}
  $\pi(T) = \max_{\vec v} \pi_{\vec v} (T)$.
 \end {observation}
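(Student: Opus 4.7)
The plan is to unpack the definition of an affine transformation and observe that the condition ``local $z$-maximum of $A(T)$'' depends only on one row of the linear part of $A$. Write $A(x) = Lx + t$ with $L \in \mathbb{R}^{3\times 3}$ and $t \in \mathbb{R}^3$, and let $\vec{w}$ denote the third row of $L$. For any vertex $v$ of $T$ and any neighbor $u$, the translation cancels and we get $(A(v))_z - (A(u))_z = \vec{w}\cdot(v-u)$. Hence $v$ is extremal in the $z$-direction in $A(T)$ (i.e., every neighbor has $z$-coordinate at most that of $v$) if and only if $v$ is a local maximum of $T$ in the direction $\vec{w}$. Applying this equivalence vertex by vertex, and recalling that $m(A(T))$ counts only those vertices that are internal and convex in the original $T$ (which is exactly the set counted by $\pi_{\vec{w}}$), yields the identity $m(A(T)) = \pi_{\vec{w}}(T)$.

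From here the observation is immediate in both directions. First I would note that for every $A \in \mathcal{A}(T)$ there is a vector $\vec{w}$ (the third row of its linear part) with $m(A(T)) = \pi_{\vec w}(T) \leq \max_{\vec v} \pi_{\vec v}(T)$, so taking the max over $A$ gives $\pi(T) \leq \max_{\vec v} \pi_{\vec v}(T)$. For the reverse inequality, given any $\vec{v} \in \mathbb{R}^3$ I would exhibit an explicit affine transformation realising $\pi_{\vec v}$: take $L$ to be any $3\times 3$ matrix whose third row equals $\vec v$ (for instance the matrix with rows $e_1, e_2, \vec v$) and $t = 0$. Then $m(A(T)) = \pi_{\vec v}(T) \leq \pi(T)$, so taking the max over $\vec v$ gives $\max_{\vec v} \pi_{\vec v}(T) \leq \pi(T)$. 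Combining the two inequalities proves the equality.

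There is no real obstacle here; the statement is essentially a reformulation driven by linearity. The only point that requires a moment of care is the interplay with the convexity clause in the definition: $m(A(T))$ restricts attention to vertices convex in $T$ rather than in $A(T)$, which matches precisely the restriction built into $\pi_{\vec v}(T)$, so no mismatch arises. A second minor subtlety is that we use arbitrary affine transformations, including ones for which $A(T)$ is no longer $xy$-monotone; this is harmless because extremality in $z$ is a purely local combinatorial condition that continues to be well-defined for any (not necessarily invertible) $L$, and we never need $A(T)$ itself to be a terrain in order to count local $z$-maxima at its vertices.
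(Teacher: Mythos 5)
Your proof is correct and follows essentially the same route as the paper's: both directions hinge on associating to each affine map $A$ the single direction governing the $z$-coordinate of $A(T)$ (you extract it as the third row of the linear part; the paper equivalently takes the normal of $A^{-1}(\{z=0\})$), and conversely exhibiting an explicit $A$ realising any given $\vec v$ (you use a matrix with prescribed third row, the paper a rotation). Your algebraic phrasing is if anything slightly cleaner about why only local maxima, and not minima, are picked up.
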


\begin {proof}
  Clearly, for every vector $\vec v$ there exists an affine transformation $A$ such that $m(A(T)) = \pi_{\vec v}(T)$: take $A$ equal to the rotation that makes $\vec v$ vertical.
  We will show that also for every affine transformation $A$ there exists a vector $\vec v$ for which $m(A(T)) = \pi_{\vec v}(T)$.
  In particular, this then implies that the maximum value of $m(A(T))$ over all $A$ is equal to the maximum value of $\pi_{\vec v}(T)$ over all $\vec v$.

  Let $A$ be an affine transformation, and let $H$ be the horizontal plane $z=0$. Consider the transformed plane $H' = A^{-1}(H)$. Then any vertex of $T$ which has the property that all neighbors are on the same side of $H'$ in $T$, will be a local maximum or local minimum in $A(T)$.
  Now, choose for $\vec v$ the vector perpendicular to $H'$ and pointing in the direction which will correspond to local maxima.\qed
\end {proof}

It is easy to see that Observation~\ref{obs:pric} also holds for the variant of prickliness associated to true local maxima.
  

If we denote the two variants of prickliness associated to the two definitions of local maximum as {\em true prickliness} and {\em simple prickliness}, we have the following:
  
  \begin {lemma} \label {lem:simpleistrue-1}
    For any terrain, the true prickliness is the same as the simple prickliness.
  \end {lemma}
  
  For the proof of Lemma~\ref {lem:simpleistrue-1}, let a direction $\vec v$ be {\em degenerate} for a given terrain $T$ is there are at least two vertices of $T$ that are in a common plane perpendicular to $\vec v$ (that is, that would have the same height after a transformation that makes $\vec v$ vertical).

\begin{proof}
Let $T$ be a terrain and let $\vec w$ be a direction. We denote by $t(\vec w)$ and $s(\vec w)$, respectively, the number of internal and convex vertices of $T$ that are true and simple local maxima of $T$ in the direction of vector $\vec w$. By Observation~\ref {obs:tgs}, we have $t(\vec w) \geq s(\vec w)$. Let $\vec v$ be a direction such that true prickliness is achieved at $\vec v$, i.e., the true prickliness of $T$ is equal to $t(\vec v)$. If $s(\vec v) = t(\vec v)$, by Observation~\ref {obs:tgs}, the simple prickliness is also achieved at $\vec v$ and we are done. Otherwise, by Observation~\ref {obs:tgs}, we have $t(\vec v) > s(\vec v)$. This means there must be one or several groups of adjacent vertices that are equally far in direction $\vec v$, so $\vec v$ is degenerate. Because the space of directions is continuous, there exists a sufficiently small $\eps$ such that there exists a perturbed direction $\vec v'$ with $|\vec v - \vec v'| < \eps$ that is not degenerate and such that $t(\vec v')=t(\vec v)$. Since $\vec v'$ is not degenerate,  $s(\vec v')=t(\vec v')$. This completes the proof.
\end{proof}



  That is, the two definitions of prickliness give the same result, and thus we will refer to them simply as {\em prickliness}. And since they are the same, but the simple local maxima definition is easier to work with, we will only use simple local maxima in the remainder of this paper, unless specifically stated otherwise.

\part{Theoretical results}

\section {Prickliness and viewshed complexity in 1.5D TIN terrains}\label{sec:compl-15D}


\new{
In this section we show that the prickliness of a $1.5$D terrain and the viewshed complexity of a single viewpoint are not related, in the sense that there are examples where one is constant, and the other is linear. In order to show such an example, we need to introduce some notation.}

For every internal and convex vertex $v$ in $T$, we are interested in the vectors  $\vec w$ such that $v$ is a local maximum of $T$ in direction  $\vec w$. Note that \new{these vectors   $\vec w$ can be represented as unit vectors, and then the set of such vectors} becomes a region of the unit circle $\mathbb{S}^1$, which we denote by $se(v) \subset \mathbb{S}^1$. To find  $se(v)$, for each edge $e$ of $T$ incident to $v$ we consider the line $\ell$ through $v$ which is perpendicular to $e$. Then we take the \new{open} half-plane bounded by $\ell$ and opposite to $e$, and we translate it so that its boundary contains the origin. Finally, we intersect this half-plane with $\mathbb{S}^1$, which yields a half-circle. 
\new{The directions represented by this half-circle correspond to all the lines through $v$ that leave the interior of edge $e$ below.
By repeating this for the other edge incident to $v$, and intersecting the two half-circles associated to the two edges, we obtain a sector of $\mathbb{S}^1$ that corresponds to all lines through $v$ that leave both edges incident to $v$ below. These correspond to all directions in which $v$ is a local maximum.}
It follows that this sector indeed represents $se(v)$, \new{and it can be computed in constant time for any $v$.} See Fig.~\ref{fig:angle} for an example. 

\begin{figure}[ht]
 \centering
 \includegraphics[scale=0.75]{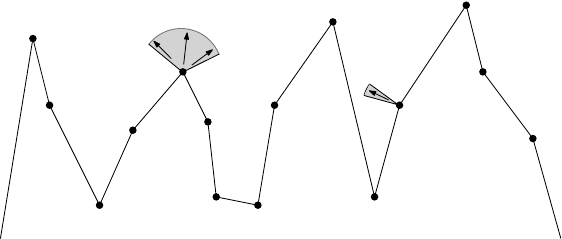}
 \caption{Example showing $se(v)$ (shaded) for two vertices.}\label{fig:angle}
\end{figure}

\begin {theorem}    \label{thm:1.5Dexample}
 There exists a 1.5D terrain $T$ with $n$ vertices and constant prickliness, and a viewpoint on $T$ with viewshed complexity $\Theta(n)$.
 \end {theorem}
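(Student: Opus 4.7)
The plan is to exhibit the terrain $T$ whose vertices lie on the graph of a strictly convex real function, say $f(x) = x^2$, sampled at $n$ evenly spaced $x$-coordinates $x_1 < \cdots < x_n$ and joined by straight segments. I will argue separately that (a) $T$ has no internal convex vertex at all, so $\pi(T) = 0$, and (b) from the leftmost vertex $v_1$ the whole terrain is visible, so the viewshed consists of all $n{-}1$ edges and has complexity $\Theta(n)$.

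For (a), at an internal vertex $v_i$ let $s_L$ and $s_R$ denote the left-to-right slopes of the two incident terrain edges, so that $s_L < s_R$ by strict convexity of $f$. A direct computation shows that a line through $v_i$ of slope $m$ leaves the left neighbor strictly below it only when $m < s_L$, and leaves the right neighbor strictly below it only when $m > s_R$; since $s_L < s_R$ these two conditions are incompatible, so no such line exists and $v_i$ is not convex in the terrain sense. Because the prickliness counts only internal vertices that are convex in the original terrain (per the footnote in the definition), we get $\pi_{\vec v}(T) = 0$ for every direction $\vec v$, hence $\pi(T) = 0 = O(1)$.

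For (b), I take the viewpoint to be $v_1$. By strict convexity of $f$, for every indices $1 < k < j$ the sample $v_k$ lies strictly below the chord from $v_1$ to $v_j$ at abscissa $x_k$. A short piecewise-linear argument then upgrades these vertex-wise inequalities to dominance of the sight segment $\overline{v_1 v_j}$ over the entire polyline on $[x_1, x_j]$: on each terrain edge, the sight segment and the polyline are both linear, and the needed inequality already holds at the two endpoints. Consequently every $v_j$ is visible from $v_1$, so are the full edges joining consecutive vertices, and the viewshed is the entire terrain with complexity $\Theta(n)$.

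The main obstacle I anticipate is purely terminological: one must verify that the paper's notion of a convex vertex (admitting a supporting line from above) is ruled out precisely at the internal vertices of the graph of a strictly convex function, and not conflated with convexity of the underlying function. Once the slope argument in (a) settles this translation, both halves of the proof are short and the theorem follows.
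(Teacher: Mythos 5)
Your construction is internally correct as far as it goes: the internal vertices of a sampled strictly convex function admit no supporting line from above, so none of them is convex in the terrain sense and $\pi(T)=0$; and from the leftmost vertex every chord dominates the polyline, so the entire terrain is visible. The gap is in what this buys you. A viewshed that equals the whole terrain is exactly the situation the paper dismisses as trivial (``in a totally concave terrain, the viewshed of any viewpoint will be the whole terrain, so it will have a trivial description''): your visible region is a single maximal visible piece with no visibility transitions at all. The complexity that Theorem~\ref{thm:1.5Dexample} is meant to exhibit is the combinatorial complexity of the visibility structure --- the number of maximal visible portions separated by invisible ones --- and for your terrain that number is $1$, not $\Theta(n)$. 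Counting the $n-1$ fully visible edges as ``complexity $\Theta(n)$'' would make the theorem vacuous (any concave terrain would do) and is inconsistent with how the paper uses the measure elsewhere, where the interesting contributions are the transition vertices created by occlusion, not the terrain vertices themselves.

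The paper's proof therefore has to work harder: it places pairs of vertices $v_i, w_i$ on $n/2$ rays emanating from a point $p$, with angle $2/n$ between consecutive rays and $\angle w_{i-1}v_iw_i = 180-3/n$, so that (i) a viewpoint just to the right of $p$ sees, in each section $v_iw_iv_{i+1}$, an invisible portion followed by a visible one --- giving $\Theta(n)$ maximal visible pieces --- while (ii) each sector $se(v_i)$ of directions in which $v_i$ is a local maximum has width $3/n$ and is rotated by $2/n$ relative to $se(v_{i+1})$, so no direction lies in more than two such sectors and the prickliness stays constant. You need some mechanism of this kind that creates $\Theta(n)$ occlusions without letting the occluding vertices all become local maxima simultaneously in any single direction; a globally convex profile cannot produce any occlusion at all, so it cannot witness the claimed disconnect between prickliness and viewshed complexity.
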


\begin{proof}
 The construction is illustrated in Fig.~\ref{fig:1.5Dex-vase}, left. From a point $p$, we shoot $n/2$ rays in the fourth quadrant of $p$ such that the angle between any pair of consecutive rays is $2/n$. On the $i$th ray, there are two consecutive vertices of the terrain, namely, $v_i$ and $w_i$. The vertices are placed so that $\angle w_{i-1}v_iw_{i}=180-3/n$.
 

\begin{figure}[ht]
 \centering
	\includegraphics[scale=0.9]{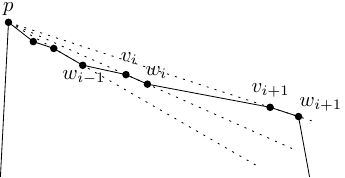} \quad \quad \quad
	\includegraphics[scale=0.65]{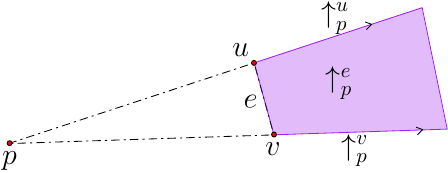}
		\caption{Left: terrains with low prickliness can have high viewshed complexity. Right: a vase.}
		\label{fig:1.5Dex-vase}
 \end{figure}

For every $i$, we have that $se(v_i)$ has angle $3/n$, while $se(w_i)$ is empty because $w_i$ is not convex. Since the angle between $w_{i-1}v_i$ and  $w_{i}v_{i+1}$ is $2/n$ and the angle between $v_iw_i$ and  $v_{i+1}w_{i+1}$ is also $2/n$, we have that $se(v_{i+1})$ can be obtained by rotating counterclockwise $se(v_i)$ by an angle of $2/n$. Thus, $se(v_i)\cap se(v_{i+1})$ has angle $1/n$, and $se(v_i)\cap se(v_{i+j})$ is empty for $j\geq 2$. We conclude that the prickliness of the terrain is constant.

If a viewpoint is placed close to $p$ along the edge emanating to the right of $p$, then for every $i$ the section $v_iw_iv_{i+1}$ contains a non-visible portion followed by a visible one. Hence, the complexity of the viewshed of the viewpoint is $\Theta(n)$. \qed
\end{proof}

 \section{Prickliness and viewshed complexity in 2.5D TIN terrains}
\label{sec:compl-25D}


Surprisingly, and in contrast to Theorem~\ref {thm:1.5Dexample},
we will show in Theorem~\ref{thm:2.5comp} that in 2.5D there is a provable relation between prickliness and viewshed complexity. 

We recall some terminology introduced in~\cite{fishy2014}.
%
%
%
 Let $v$ be a vertex of $T$, and let $p$ be a viewpoint.  We denote by $\uparrow_p^v$ the half-line with origin at
$p$ in the direction of vector $\overrightarrow{pv}$. Now, let
$e=uv$ be an edge of $T$.  The \emph {vase} of
$p$ and $e$, denoted $\uparrow_p^e$, is the \new{unbounded region defined by}
$e$,
$\uparrow_p^u$, and $\uparrow_p^v$ (see Fig.~\ref{fig:1.5Dex-vase}, right). 


Vertices of the viewshed of $p$ can have three types~\cite{fishy2014}. A vertex of type 1 is a vertex of $T$, of which there are clearly only $n$.
A vertex of type 2 is the intersection of an edge of $T$ and a vase.
A vertex of type 3 is the intersection of a triangle of $T$ and two vases.
With the following two lemmas we will be able to prove Theorem~\ref{thm:2.5comp}.

\begin {lemma} \label{cl:type2}
There are at most $O(n \cdot \pi(T))$ vertices of type $2$.
\end {lemma}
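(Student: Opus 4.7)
The plan is to charge each type~2 vertex to a specific vertex of $T$ whose associated half-line $\uparrow_p^w$ contributes a bounding edge of the vase, and then to bound the number of contributing vertices using the prickliness.

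First, I would observe that every type~2 vertex $q$ must lie on a half-line $\uparrow_p^w$ for some vertex $w$ of $T$. The boundary of a vase $\uparrow_p^e$ with $e=uv$ consists of $e$ together with $\uparrow_p^u$ and $\uparrow_p^v$; since two distinct terrain edges can only meet at a shared vertex, an edge $f\ne e$ of $T$ cannot cross the interior of $e$, so $q$ must lie on $\uparrow_p^u$ or $\uparrow_p^v$. Moreover, in order for $q$ to actually appear on the viewshed boundary, the vertex $w$ with $q\in\uparrow_p^w$ must be an internal convex vertex whose every neighbor $x$ satisfies $(x-w)\cdot(w-p)\le 0$; otherwise $\uparrow_p^w$ immediately re-enters the terrain past $w$ and does not bound visibility. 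This condition is precisely that $w$ is a local maximum of $T$ in the direction $\vec v_w := \overrightarrow{pw}$, in the sense of Observation~\ref{obs:pric}.

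For each such contributing vertex $w$, the ray $\uparrow_p^w$ crosses each of the $O(n)$ edges of $T$ at most once, so it produces at most $O(n)$ type~2 vertices. Letting $s$ denote the number of contributing vertices, we get $O(s\cdot n)$ type~2 vertices in total.

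The main obstacle is to prove $s=O(\pi(T))$. This is non-trivial because each contributing vertex $w$ certifies itself as a local maximum only in its own direction $\vec v_w$, and a~priori these directions can all differ across different $w$'s, so one cannot directly apply $\pi_{\vec v}(T)\le\pi(T)$ with a single $\vec v$. The plan is to exhibit a single direction $\vec v^*$ (depending on $p$) lying in the cone of admissible local-maximum directions of every contributing vertex, yielding $s\le \pi_{\vec v^*}(T)\le \pi(T)$. A natural candidate arises from a limit in which $p$ is pushed to infinity so that the rays $\overrightarrow{pw}$ become parallel to a common vector; an alternative is to adapt the analysis of the cones $se(\cdot)\subseteq\mathbb{S}^2$ (the 2.5D analogue of the sets used in Section~\ref{subsec:compl-15D}) and charge each contributing vertex to a distinct extremal witness direction. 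Combining this with the per-vertex $O(n)$ bound then yields the claimed $O(n\cdot\pi(T))$ inequality.
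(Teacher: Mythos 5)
There is a genuine gap here—in fact two.

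First, your structural claim that every type~2 vertex lies on a bounding ray $\uparrow_p^u$ or $\uparrow_p^v$ of the vase is incorrect. A vase $\uparrow_p^e$ is a \emph{two-dimensional} ruled surface in $\mathbb{R}^3$, and a type~2 vertex is the intersection of a terrain edge $f$ with that surface; generically this intersection point lies in the relative interior of the vase, not on either of its bounding half-lines. The observation that two terrain edges can only meet at a shared vertex rules out $q\in e$, but says nothing about $q$ lying elsewhere on the surface swept between $\uparrow_p^u$ and $\uparrow_p^v$. So the reduction of type~2 vertices to points on rays through terrain vertices already fails, and with it the bound ``$O(n)$ type~2 vertices per contributing vertex.''

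Second, and more fundamentally, the step you yourself flag as the main obstacle—showing that the number $s$ of contributing vertices is $O(\pi(T))$ via a single direction $\vec v^*$ common to all their admissible cones—cannot be carried out. The silhouette vertices of $T$ as seen from $p$ can surround the viewpoint (think of $p$ sitting inside a crater whose rim vertices are each extremal only in their own outward radial direction); their cones of local-maximum directions can be pairwise disjoint, so no single $\vec v^*$ certifies them all, and indeed $s$ need not be $O(\pi(T))$ at all. The paper's proof avoids any global charging of this kind by working \emph{per edge}: for a fixed edge $e$, let $H$ be the plane spanned by $e$ and $p$. Every maximal invisible portion $qr$ of $e$ flanked by visible portions forces a connected piece of $T$ to lie strictly above $H$ and project (perpendicularly to $H$) into the triangle $pqr$; that piece attains a local maximum in the single direction normal to $H$, at a convex internal vertex, and distinct invisible portions get distinct such maxima. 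Hence each edge is cut into $O(\pi_{\vec n_H}(T))=O(\pi(T))$ pieces, and summing over the $O(n)$ edges gives the claim. The crucial point you are missing is that the direction witnessing the local maxima must be chosen per occluded edge, not per occluding vertex.
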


\begin {proof}
Consider an edge $e$ of $T$ and let $H$ be the plane spanned by $e$ and $p$. Consider the viewshed of $p$ on $e$. Let $qr$ be a maximal invisible portion of $e$ surrounded by two visible ones. Since $q$ and $r$ are vertices of type 2, the open segments $pq$ and $pr$ pass through a point of $T$. 
On the other hand, for any point $x$ in the open segment $qr$, there exist points of $T$ above the segment $px$. This implies that there is a continuous portion of $T$ above $H$ such that the vertical projection onto $H$ of this portion lies on the triangle $pqr$. Such portion has a \new{true} local maximum in the direction perpendicular to $H$ which is a convex and internal vertex of $T$.
In consequence, each invisible portion of $e$ surrounded by two visible ones can be assigned to a distinct point of $T$ that is a local maximum in the direction perpendicular to $H$. Hence, in the viewshed of $p$, $e$ is partitioned into at most $2 \pi(T) +3$ parts.\footnote{We obtain $2 \pi(T) +3$ parts when the first and last portion of $e$ are invisible; otherwise, we obtain fewer parts.} \qed
\end{proof}

\begin{figure}[tb]
  \centering
   \includegraphics[scale=0.6]{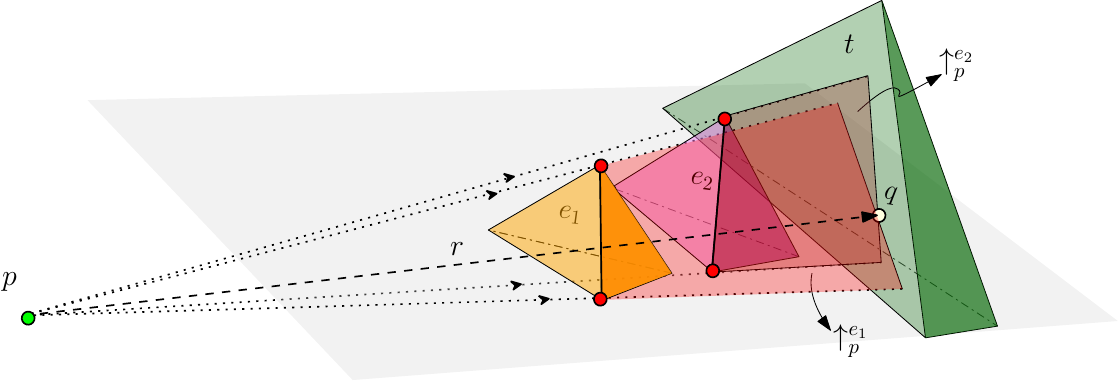} 
  \caption{The situation in the proof of Lemma~\ref{cl:type3}.}

  \label{fig:type3}
 \end{figure}

    \begin {lemma} \label{cl:type3}
There are at most $O(n \cdot \pi(T))$ vertices of type $3$.
    \end {lemma}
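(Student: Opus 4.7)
The plan is to mimic the charging argument used for Lemma~\ref{cl:type2}, but applied to an appropriate one-dimensional curve on the terrain rather than to a single edge. Fix an edge $e_1$ of $T$ and consider the ``vase curve'' $\gamma_{e_1} = \uparrow_p^{e_1} \cap T$, i.e.\ the polygonal curve on $T$ where the upper boundary of the vase of $e_1$ meets the terrain. This curve consists of line segments $s_f = H_{e_1} \cap f$ lying on the triangles $f$ that the vase of $e_1$ meets (where $H_{e_1}$ is the plane through $p$ and $e_1$), joined at type~2 vertices on edges of $T$. Every type~3 vertex $v^*$ that involves the vase of $e_1$ must lie on $\gamma_{e_1}$, being the point at which $\gamma_{e_1}$ crosses the vase of some second edge $e_2$.

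I would then classify points on $\gamma_{e_1}$ according to the visibility status of their immediate two-sided neighborhood on the relevant triangle $f$. On one side of $\gamma_{e_1}$ (the side lying in the shadow region behind $e_1$ as seen from $p$), visibility is already blocked by $e_1$; on the other side, the point is visible from $p$ unless some further part of $T$ happens to obstruct the line of sight. Hence a type~3 vertex corresponds precisely to a transition along $\gamma_{e_1}$ at which the ``other'' side switches between visible and invisible due to a second blocker edge $e_2$. As in Lemma~\ref{cl:type2}, the type~3 vertices on $\gamma_{e_1}$ therefore separate $\gamma_{e_1}$ into alternating arcs where the other side is visible and arcs where it is invisible.

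The key step is then the bump argument adapted from Lemma~\ref{cl:type2}. For every maximal sub-arc $\alpha$ of $\gamma_{e_1}$ on which the ``other'' side is invisible while the two neighboring sub-arcs have the other side visible, I would argue, exactly as in the type~2 proof but now applied inside the plane $H_{e_1}$ (see the configuration in Fig.~\ref{fig:type3}), that there must exist a continuous portion of $T$ lying above $H_{e_1}$ whose projection onto $H_{e_1}$ covers the region swept out by the sight lines from $p$ through $\alpha$. Such a portion must contain an internal convex vertex of $T$ that is a local maximum in the direction perpendicular to $H_{e_1}$. Distinct invisible sub-arcs yield distinct bumps and hence distinct local maxima in this one fixed direction, so their total number over $\gamma_{e_1}$ is at most $\pi(T)$. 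Consequently the number of type~3 vertices on $\gamma_{e_1}$ is $O(\pi(T))$.

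Summing over all $n$ edges and dividing by two (since each type~3 vertex involves two edges and so is counted twice) gives the desired $O(n \cdot \pi(T))$ bound. The main obstacle I anticipate is verifying that different maximal invisible sub-arcs on $\gamma_{e_1}$ really give rise to \emph{distinct} bumps above $H_{e_1}$, and that each bump can be associated with an internal convex vertex that is already convex in the \emph{original} (not transformed) terrain in the sense required by the definition of $\pi(T)$. This requires essentially the same care as in Lemma~\ref{cl:type2}, but across different triangles $f$ simultaneously, so one has to argue that a single terrain bump above $H_{e_1}$ cannot be the cause of two different invisible sub-arcs of $\gamma_{e_1}$.
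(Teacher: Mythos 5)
Your approach is genuinely different from the paper's, and unfortunately it contains a fatal gap: the per-edge claim that each vase curve $\gamma_{e_1}$ carries only $O(\pi(T))$ type~3 vertices is false. The paper's own lower-bound construction (Theorem~\ref{obs:lowerbound2.5D}) is a counterexample: there the viewshed on the back triangles is a grid formed by the shadow curves of the $\Theta(\pi(T))$ front mountains crossing the $\Theta(n)$ occlusion curves of the back steps, so the vase curve of a single front-mountain edge crosses $\Theta(n)$ back-step vases and carries $\Theta(n)$ type~3 vertices, even when $\pi(T)$ is a constant. The step that breaks is exactly the one you flagged: distinct maximal invisible sub-arcs of $\gamma_{e_1}$ need not yield distinct bumps. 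In Lemma~\ref{cl:type2} the argument works because $e$ is a straight segment, so the sight-line triangles $pq_ir_i$ of distinct invisible portions are interior-disjoint and each is \emph{sealed off} by its two visible endpoints, forcing a separate closed bump (with its own internal convex local maximum) inside each triangle. The curve $\gamma_{e_1}$, by contrast, is not radially or angularly monotone as seen from $p$: it can leave the shadow of a blocker and re-enter it at a greater depth, so the sight-line regions of distinct invisible sub-arcs can be nested rather than disjoint, a single occluder can account for many invisible sub-arcs, and the visible endpoints of a sub-arc $\alpha$ no longer seal off its sight-line region (the extreme rays of that region may pass through \emph{invisible} interior points of $\alpha$, so the occluder can escape through them, connect to other terrain, and fail to contribute a fresh internal convex local maximum). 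Hence the count of invisible sub-arcs cannot be charged injectively to local maxima in the direction perpendicular to $H_{e_1}$.

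The paper avoids all of this with a short global charging argument that piggybacks on Lemma~\ref{cl:type2} instead of re-running its bump argument. For a type~3 vertex $q=t\,\cap \uparrow_p^{e_1}\cap \uparrow_p^{e_2}$, the ray $r$ from $p$ through $q$ meets both $e_1$ and $e_2$; if the two edges are disjoint, $q$ is charged to the type~2 vertex $r\cap e_{\mathrm{far}}$, and this charging is injective because beyond $q$ the ray is blocked by $t$, so no second type~3 vertex can lie on $r$ after $q$; if $e_1$ and $e_2$ share a vertex $v$, then $r$ passes through $v$ and $q$ is the first hit of that ray with a triangle interior, so each terrain vertex absorbs at most one such charge. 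This bounds the number of type~3 vertices by the number of type~2 vertices plus $O(n)$, i.e.\ $O(n\cdot\pi(T))$. If you want to salvage a direct argument along vase curves, you would essentially have to rediscover this charging: the natural object to charge a crossing of two vases to is a point on one of the two \emph{edges} (where the type~2 bound applies), not a bump seen from the vase plane.
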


    \begin {proof}
Let $q$ be a vertex of type 3 in the viewshed of $p$. Point $q$ is the intersection between a triangle $t$ of $T$ and two vases, say,  $\uparrow_p^{e_1}$ and $\uparrow_p^{e_2}$; see Fig.~\ref{fig:type3}. Let $r$ be the ray with origin at $p$ and passing through $q$. Ray $r$ intersects edges $e_1$ and $e_2$. First, we suppose that $e_1$ and $e_2$ do not share any vertex and, without loss of generality, we assume that $r\cap e_1$ is closer to $p$ than $r\cap e_2$. Notice that $r\cap e_2$ is a vertex of type 2 because it is the intersection of $e_2$ and $\uparrow_p^{e_1}$, and $\uparrow_p^{e_1}$ partitions $e_2$ into a visible and an invisible portion. Thus, we charge $q$ to $r\cap e_2$. If another vertex of type 3 was charged to $r\cap e_2$, then such a vertex would also lie on $r$. However, no point on $r$ after $q$ is visible from $p$ because the visibility is blocked by $t$. Hence, no other vertex of type 3 is charged to $r\cap e_2$.

If $e_1$ and $e_2$ are both incident to a vertex $v$, since $t\, \cap \uparrow_p^{e_1} \cap \uparrow_p^{e_2}$ is a type 3 vertex, we have that $r$ passes through $v$. Therefore, $q$ is the first intersection point between $r$ (which can be seen as the ray with origin at $p$ and passing through $v$) and the interior of some triangle in $T$. Therefore, any vertex $v$ of $T$ creates at most a unique vertex of type 3 in this way.\qed
    \end {proof}

  \begin {theorem}
  \label{thm:2.5comp}
      The complexity of a viewshed in a 2.5D terrain is $O(n \cdot \pi(T))$. 
    \end {theorem}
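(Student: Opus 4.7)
My plan is to observe that the theorem is essentially a bookkeeping consequence of the two preceding lemmas, together with the standard characterization of the viewshed's combinatorial complexity in terms of its vertex set. I would first recall from the paper of Fishkin et al. that the combinatorial complexity of the viewshed of a single viewpoint in a 2.5D terrain is, up to constant factors, proportional to the number of vertices of the planar subdivision defining the viewshed, and that every such vertex falls into exactly one of the three types introduced just before Lemma~\ref{cl:type2}. Thus it suffices to sum the contributions of the three types.

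Next, I would plug in the counts. Type~1 vertices are vertices of the underlying terrain $T$, of which there are at most $n$. Lemma~\ref{cl:type2} gives an $O(n \cdot \pi(T))$ bound on the number of type~2 vertices, and Lemma~\ref{cl:type3} gives the same bound for type~3 vertices. Adding these three contributions yields a total of $n + O(n \cdot \pi(T)) + O(n \cdot \pi(T)) = O(n \cdot \pi(T))$ whenever $\pi(T) \geq 1$.

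The only mildly delicate point is the degenerate case $\pi(T)=0$, which I would handle separately at the start of the argument. If $T$ has no internal convex vertex that is a local maximum in any upward direction, then (informally) no vertex of $T$ can block visibility above it in the required way, so every edge of $T$ is either fully visible or fully invisible from $p$ in the sense used in the proof of Lemma~\ref{cl:type2}. Consequently only type~1 vertices can arise, the viewshed has complexity $O(n)$, and the claimed bound $O(n\cdot\pi(T))$ should be read as $O(n)$ in this edge case. No real obstacle remains: once this corner case is addressed, the theorem follows immediately by addition.
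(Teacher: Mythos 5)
Your proposal is correct and follows essentially the same route as the paper: the theorem is obtained by summing the three vertex types, with at most $n$ vertices of type~1 and $O(n\cdot\pi(T))$ vertices of types~2 and~3 by Lemmas~\ref{cl:type2} and~\ref{cl:type3}. Your explicit treatment of the degenerate case $\pi(T)=0$ is a reasonable extra precaution that the paper leaves implicit (its Lemma~\ref{cl:type2} bound of $2\pi(T)+3$ parts per edge already yields $O(n)$ in that case), but it does not change the substance of the argument.
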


Next we describe a construction showing that the theorem is best possible.

\begin{theorem}
\label{obs:lowerbound2.5D}
There exists a 2.5D terrain $T$ with $n$ vertices and prickliness $\pi(T)$, and a viewpoint on $T$ with viewshed complexity $\Theta(n \cdot \pi(T))$.
\end{theorem}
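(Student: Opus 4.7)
The plan is to parameterize the quadratic construction sketched in Figure~\ref{fig:quadratic} by the desired prickliness $k = \pi(T)$, using $k$ sharp peaks together with $\Theta(n-k)$ slender strip-triangles to build a terrain on $n$ vertices whose viewshed attains $\Theta(nk)$ complexity. Concretely, I would place the viewpoint $p$ at the origin, arrange $k$ narrow tetrahedral spikes with apices at $(1, y_i, h)$ for equally spaced $y_i \in [-1,1]$ and a large height $h$, and fill a thin slab $x \in (2, 2+\delta)$ beyond the spikes with $m = \Theta(n-k)$ long strip-triangles spanning the full $y$-range swept by the spikes. All non-apex vertices would receive elevations from a strictly convex bowl function $f(x,y) = \varepsilon(x^2+y^2)$, with the viewpoint sitting near the bottom of the bowl.

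The prickliness analysis then reduces to identifying the internal convex vertices. For any non-apex interior vertex $v$, strict convexity of $f$ forces every floor neighbor of $v$ to lie strictly above the tangent plane to the floor at $v$, and since the triangulation places floor neighbors around $v$ in every direction, no non-vertical plane through $v$ can keep all of them below. Spike-base vertices are also not convex: any plane through such a vertex that keeps the apex below must tilt steeply toward the apex, which in turn pushes the floor neighbors on the opposite side of the vertex above the plane. The apices themselves are convex (the horizontal plane through each apex dominates its three base neighbors) and are strict local maxima in the vertical direction, so by Observation~\ref{obs:pric} we get $\pi(T) = k$.

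For the viewshed lower bound I would argue that each spike, as seen from $p$, projects a narrow shadow wedge bounded by two rays of the form ``$p$ through a silhouette edge'' onto the floor beyond. The $k$ wedges fan out from $p$ and, at the strip slab $x \in (2, 2+\delta)$, occupy well-separated $y$-bands within the range $[-2, 2]$. Because each strip-triangle is long in $y$ and thin in $x$, every strip is crossed by all $\Theta(k)$ bounding rays and contributes $\Theta(k)$ alternating visible/invisible pieces to the viewshed; summing over the $m = \Theta(n)$ strips gives the claimed $\Theta(nk)$ total. The main obstacle will be juggling several geometric parameters simultaneously: the spike height $h$, the spike base width, the bowl curvature $\varepsilon$, and the strip extent $\delta$ all have to be chosen so that (i) each shadow ray actually reaches every strip without being clipped by the bowl, (ii) the $k$ wedges remain separated when they hit the strips, and (iii) the floor stays strictly convex once the spike base triangles are stitched into the triangulation. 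I would fix these parameters in the order $\varepsilon \ll \delta \ll \text{base width} \ll 1 \ll h$, so that each is small enough (or large enough) to accommodate what comes before it, and handle the terrain boundary by a separate concave ``moat'' to ensure no unintended internal convex vertices appear near the edge.
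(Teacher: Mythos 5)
Your proposal is correct and follows essentially the same strategy as the paper: take the classic quadratic viewshed construction with $k$ front peaks and $\Theta(n)$ back strips, and modify the non-peak part so that only the $k$ apices remain convex internal vertices, which pins $\pi(T)$ at $k$ while the spike shadows still carve $\Theta(k)$ pieces out of each of the $\Theta(n)$ strips. The paper kills the convexity of the back vertices by enclosing the construction in a box at an elevation between the strips and the peaks rather than by your convex-bowl floor, but this is an implementation detail, not a different argument.
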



\begin{proof}
Consider the standard quadratic viewshed construction, composed of a set of front mountains and back triangles (Fig.~\ref{fig:lower_bound} (left)).
Notice that there can be at most $\pi(T)$ mountains ``at the front''.
We add a surrounding box around the construction, see Fig.~\ref{fig:lower_bound} (right), such that each vertex of the back triangles is connected to at least one vertex on this box.
We set the elevation of the box so that it is higher than all the vertices of the back triangles, but lower than those of the front mountains.
In this way, no vertex of the back triangles will be a local maximum in any direction, and all local maxima will come from the front. \qed
\end{proof}

\begin{figure}[t] 
\centering\includegraphics[scale=0.5]{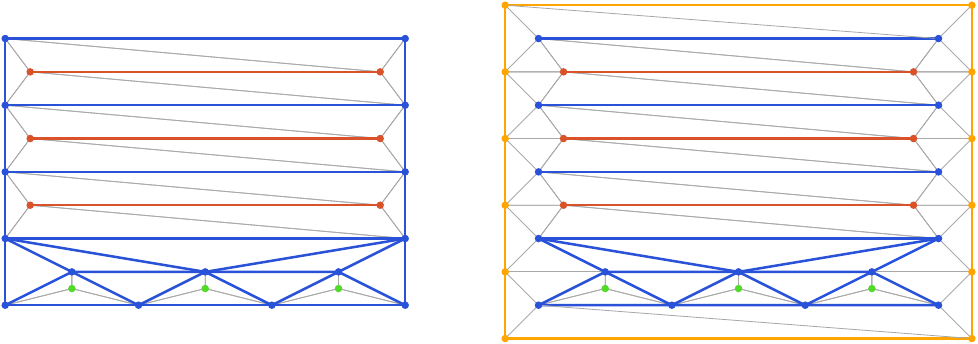}
\caption{
\new{Left: Schematic top-down view of the classic quadratic construction; see Fig.~\ref {fig:quadratic} for a  perspective view with quadratic complexity.
Right: Construction adapted to have small prickliness.
Blue vertices/edges are low, red are medium hight, and green are high.}
The construction on the right introduces a new height (yellow) between medium and high, and changes the triangulation slightly, to ensure that all convex vertices in the construction are green.}
\label{fig:lower_bound}
\end{figure}

\section {Prickliness computation in 1.5D TIN terrains}
\label {sec:comp15D}

  
    
 \subsection{Algorithm} \label {sec:alg1}


For every internal and convex vertex $v$ in $T$, we compute \new{the circular sector} $se(v)$ in constant time, \new{by combining the directions perpendicular to the two edges incident to $v$}, as explained in Section~\ref{sec:compl-15D}.
The prickliness of $T$ is the maximum number of sectors of type $se(v)$ whose intersection is non-empty. 
\new{To compute it,  first we sort the bounding angles of the sectors, distinguishing when a sector begins and when one ends. Then we go through them, in order, keeping track of how many sectors contain each sub-sector between two consecutive angles. 
The maximum number found is the prickliness.
This can be done  in $O(n \log n)$ time.
}
Thus, we obtain:

\begin{theorem}
 The prickliness of a 1.5D terrain can be computed in $O(n\log n)$ time. 
\end{theorem}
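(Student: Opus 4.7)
The plan is to turn the geometric characterization of $se(v)$ developed in Section~\ref{subsec:compl-15D} into a maximum-depth computation on a collection of arcs of $\mathbb{S}^1$, and then solve that combinatorial problem by a standard angular sweep.

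First, I would preprocess the terrain: for each internal and convex vertex $v$, compute the sector $se(v) \subset \mathbb{S}^1$ in $O(1)$ time by intersecting the two half-circles determined by the two edges of $T$ incident to $v$, exactly as in Section~\ref{subsec:compl-15D}. This yields at most $n$ arcs on $\mathbb{S}^1$ (we may safely restrict to the upper semicircle by Observation~\ref{obs:pric}, since vectors with negative $y$-coordinate contribute nothing). By Observation~\ref{obs:pric} and the definition of $se(v)$, we have $\pi(T) = \max_{\vec w \in \mathbb{S}^1} |\{v : \vec w \in se(v)\}|$, so the prickliness is exactly the maximum depth of the arrangement of these arcs on $\mathbb{S}^1$.

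Next I would compute this maximum depth in $O(n \log n)$ time by a standard sweep: create a list of $2n$ angular events, one for each arc endpoint, each labelled $+1$ (arc opens) or $-1$ (arc closes); sort these events by angle in $O(n\log n)$ time; pick any sweep start angle $\theta_0$, compute in $O(n)$ time the number $d_0$ of arcs that already contain $\theta_0$, and then sweep the events once, maintaining a running counter initialized to $d_0$, updating it at each event and keeping track of its maximum. The answer is this maximum.

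The correctness is essentially immediate from the characterization of $se(v)$: between any two consecutive events the depth is constant, and every direction $\vec w$ attaining the maximum of $\pi_{\vec w}(T)$ lies in some such interval. The only step that requires mild care is the \emph{circularity} of $\mathbb{S}^1$, which I would handle by the explicit initialization at $\theta_0$ (so the counter is always the true depth, not a relative depth); degenerate situations where several events share an angle are handled by processing all $+1$ events before the $-1$ events at that angle. The main obstacle, if any, is not in the sweep itself but in making sure that the geometric reduction from $\pi(T)$ to the arc-depth problem is lossless --- in particular, that only internal convex vertices contribute arcs, and that each arc is the correct sector of directions in which $v$ is a local maximum --- both of which follow directly from the earlier characterization.
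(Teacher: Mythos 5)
Your proposal is correct and follows essentially the same route as the paper: compute each sector $se(v)$ in constant time from the two incident edges, then find the maximum depth of the resulting arcs on $\mathbb{S}^1$ by sorting the $2n$ bounding angles and doing a single sweep. You actually spell out the circularity initialization and tie-breaking more carefully than the paper does, but the approach is identical.
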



\subsection{Lower bound} \label{subsec:red-15}

Now we show that $\Omega(n \log n)$ is also a lower bound for computing the prickliness of a 1.5D terrain. 
The reduction is from the problem of checking distinctness of $n$ integer elements, which has an $\Omega(n \log n)$ lower bound in the bounded-degree algebraic decision tree model\footnote{\new{This is a well-known computational model to solve decision problems involving numbers, bounding  how many comparisons are needed to arrive to a solution.}}~\cite{LubiwR-lb,Yao-lb}.

Suppose we are given a set ${\cal S}=\{x_1,x_2,\ldots,x_n\}$ of $n$ integer elements, assumed without loss of generality to be positive. We multiply all elements of $\cal S$ by $180/(\max{\cal S}+1)$ and obtain a new set ${\cal S'}=\{x'_1,x'_2,\ldots,x'_n\}$ such that $0 < x'_i < 180$, for each $x'_i$. 
We construct a terrain $T$ \new{such that computing its prickliness allows to determine if all elements of ${\cal S}$ are unique}. 

\new{For each $x'_i$, the goal is to create in $T$ a convex vertex $v_i$ such that $se(v_i)=(x'_i - \varepsilon, x'_i + \varepsilon)$, where $\varepsilon =18/(\max{\cal S}+1)$, and such that its two neighbors are at distance $1$ from $v_i$.\footnote{We sometimes write  \new{$se(v)=(\alpha,\beta)$}, where $\alpha$ and $\beta$ are the angles bounding the sector.} This is possible because, as explained in Section~\ref{sec:compl-15D},  $se(v_i)$ is determined by the slopes of the two edges incident to $v_i$, so given $se(v_i)$ we can infer the corresponding slopes. Together with the fact that they are at distance $1$ from $v_i$, this completely determines the positions of the  vertices incident to $v_i$ to its left and right. These vertices are denoted  by $w^l_i$ and $w^r_i$, respectively. See Fig.~\ref{fig:lb} for an example.}

\begin{figure}[t]
 \centering
 \includegraphics[scale=0.4]{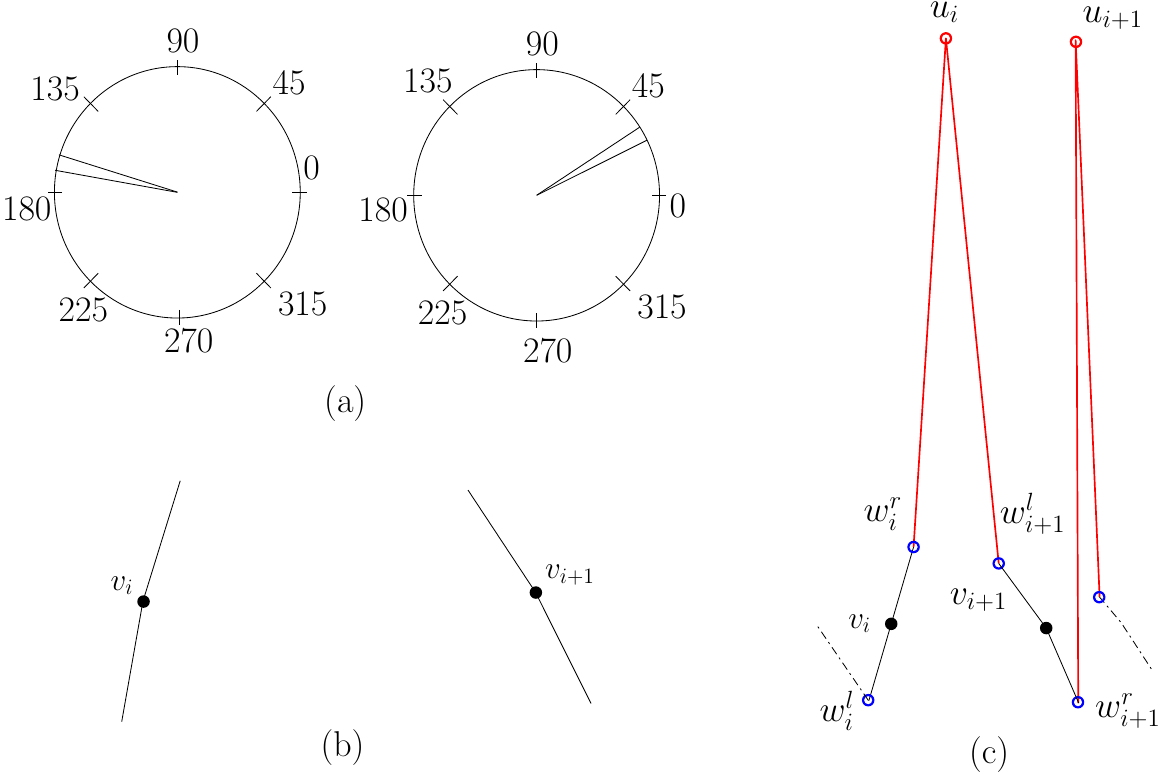}
 \caption{(a) Sectors associated to the element set $\{160,25\}$. (b) The corresponding convex vertices. (c) Construction of the terrain.}\label{fig:lb}
 \end{figure}

We arrange these convex vertices in the order of the elements in ${\cal S'}$ from left to right, and we place all of them at the same height.
Then we place a dummy vertex $u_i$ between every pair of consecutive vertices $v_i$ and $v_{i+1}$, and connect $u_i$ to $w^r_i$ and $w^l_{i+1}$; see Fig.~\ref{fig:lb}c. The height of $u_i$ is chosen so that its two neighbors become concave vertices, and also so that $[\min{(\cal S')}- \varepsilon,\max{(\cal S')}+ \varepsilon] \subseteq se(u_i)$: \new{In detail, the slope of $w^r_iu_i$ should be greater than $\max\{\mathrm{slope\ of\ } v_iw^r_i,0,\max{(\cal S')}+ \varepsilon-90º\}$, which guarantees that $w^r_iu_i$ has positive slope, $w^r_i$ is concave and $se(u_i)=(\alpha,\beta)$, with $\beta\geq \max{(\cal S')}+ \varepsilon$. Additionally, the slope of $u_iw^l_{i+1}$ should be smaller that $\min\{\mathrm{slope\ of\ } w^l_{i+1}v_{i+1},0,\min{(\cal S')}- \varepsilon -90º\}$, which guarantees that $u_iw^l_{i+1}$ has negative slope, $w^l_{i+1}$ is concave and $\alpha \leq \min{(\cal S')}- \varepsilon$.}
The following lemma allows us to prove Theorem~\ref{thm:lower} below.

\begin{lemma}\label{lem:lb}
The prickliness of $T$ is $n$ if and only if all elements in ${\cal S}$ are distinct.
\end{lemma}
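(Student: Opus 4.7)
The plan is to recast $\pi(T)$ via Observation~\ref{obs:pric} as the maximum, over unit vectors $\vec w \in \mathbb{S}^1$, of the number of internal convex vertices $v$ with $\vec w \in se(v)$; the whole argument then becomes a one-dimensional covering count on the circle. The internal convex vertices of $T$ are exactly the peaks $v_1,\ldots,v_n$ and the dummies $u_1,\ldots,u_{n-1}$, since by construction the $w$-vertices are either concave (by the choice of heights of the $u_i$) or endpoints of the terrain.

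Two design features then carry the proof. Each sector $se(v_i)=[x'_i-\varepsilon, x'_i+\varepsilon]$ has length exactly $2\varepsilon$, while each $se(u_i) \supseteq [\min({\cal S'})-\varepsilon, \max({\cal S'})+\varepsilon]$ contains every $se(v_j)$; hence any direction $\vec w$ lying in some $se(v_j)$ automatically lies in all $n-1$ dummy sectors, so the covering count at $\vec w$ equals $(n-1)$ plus the multiplicity of the $se(v_\cdot)$-cover at $\vec w$. The decisive step is arithmetic: each $x'_i$ is a positive integer multiple of $180/(\max{\cal S}+1)$, so any two distinct values differ by at least $180/(\max{\cal S}+1) > 36/(\max{\cal S}+1) = 2\varepsilon$; consequently $se(v_i)$ and $se(v_j)$ are identical when $x_i=x_j$ and disjoint otherwise. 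If all elements of ${\cal S}$ are distinct, the $se(v_\cdot)$-multiplicity is at most $1$ everywhere and equals $1$ at $\vec w = x'_1$, giving $\pi(T)=n$; if some $x_i=x_j$ with $i\neq j$, then $\vec w=x'_i$ lies in $se(v_i)\cap se(v_j)$ and in all dummy sectors, so $\pi(T)\geq n+1$. Combining the two implications yields the claimed equivalence.

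The only item requiring a sanity check, and the main (minor) obstacle, is that the construction really realizes the stated sector data. This is geometric rather than combinatorial: by the characterization of $se(\cdot)$ in Section~\ref{subsec:compl-15D}, the sector at a convex vertex varies continuously with the heights involved and can be made to grow from the empty set toward $[0,180]$ as the vertex is raised relative to its neighbors, so the heights needed to give each $se(v_i)$ width exactly $2\varepsilon$, to enlarge each $se(u_i)$ to contain $[\min({\cal S'})-\varepsilon, \max({\cal S'})+\varepsilon]$, and to force the $w$-vertices concave, are mutually consistent. Once this is checked, the combinatorial count above is immediate.
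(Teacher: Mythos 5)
Your proof is correct and follows essentially the same route as the paper's: both reduce the prickliness to a maximum covering count over directions, observe that the $n-1$ dummy sectors $se(u_i)$ contain all the sectors $se(v_j)$ while the $w$-vertices contribute nothing, and conclude that $\pi(T)=n$ exactly when the $se(v_i)$ are pairwise disjoint, i.e., when the elements of $\cal S$ are distinct. The only addition is that you make explicit the arithmetic fact (distinct integers scale to values more than $2\varepsilon$ apart) that the paper leaves implicit; this is a welcome but not substantively different detail.
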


\begin{proof}
Every vertex of type $u_i$ satisfies that $ [\min{(\cal S')} - \varepsilon,\max{(\cal S')}+ \varepsilon] \subseteq se(u_i)$. Thus the prickliness of $T$ is at least $n-1$. For every vertex of type $v_i$, \new{$se(v_i)=(x'_i - \varepsilon, x'_i + \varepsilon)$}. Finally, the vertices of type $w^l_i$ and $w^r_i$ are concave (or not internal, in the case of $w^l_1$ and $w^r_n$) so $se(w^l_i)$ and $se(w^r_i)$ are empty.

Consequently, \new{
the prickliness of $T$ will be $n-1+M$,
where $M$ is the maximum number of occurrences of an element of ${\cal S}$.
It follows that the prickliness of $T$ is $n$ if and only 
the elements in ${\cal S}$ are all distinct.}\qed
\end{proof}

\new{Since $T$ can be constructed in $O(n)$ time, we conclude the following:}

\begin{theorem}
\label{thm:lower}
The problem of computing the prickliness of a 1.5D terrain has an $\Omega(n \log n)$ lower bound in the bounded-degree algebraic decision tree model.
\end{theorem}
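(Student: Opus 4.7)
The plan is to give a direct reduction from the element distinctness problem, which is known to require $\Omega(n \log n)$ steps in the bounded-degree algebraic decision tree model by the results of Lubiw--Racke and Yao cited in the paper. Given an instance $\mathcal{S} = \{x_1, \ldots, x_n\}$ of positive integers, I would build the 1.5D terrain $T$ exactly as described in the paragraphs preceding Lemma~\ref{lem:lb}: first rescale to obtain $\mathcal{S}'$ with each $x_i' \in (0, 180)$, then place convex vertices $v_i$ whose sectors $se(v_i)$ are small intervals of width $2\varepsilon$ centered at $x_i'$, and finally interleave dummy vertices $u_i$ at heights chosen so that $w_i^l, w_{i+1}^l$ are concave and $se(u_i) \supseteq [\min(\mathcal{S}') - \varepsilon, \max(\mathcal{S}') + \varepsilon]$.

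Next I would invoke Lemma~\ref{lem:lb}, which directly gives $\pi(T) = n$ if and only if all elements of $\mathcal{S}$ are distinct. Hence any algorithm computing the prickliness in $T(n)$ steps yields an algorithm for element distinctness in $T(n) + O(n)$ steps (an extra comparison of the output against $n$). Combining this with the $\Omega(n \log n)$ lower bound for element distinctness forces $T(n) = \Omega(n \log n)$ as claimed.

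The only substantive thing to check is that the reduction itself fits the bounded-degree algebraic decision tree model. This amounts to verifying that every coordinate introduced during the construction of $T$, namely the positions of $v_i$, $w_i^l$, $w_i^r$, and $u_i$, is a polynomial of bounded degree in the inputs $x_i$ (together with a fixed scaling constant depending on $\max\mathcal{S}$). Both the normalization step and the placement of the $v_i$'s along a horizontal line at unit-length neighbor spacing are immediate; the height of each $u_i$ can be picked as a fixed bounded-degree function of the input that is guaranteed to be large enough so that $se(u_i)$ is the required range. I expect no real obstacle here, since all expressions are elementary arithmetic; the theorem then follows essentially as a one-line corollary of Lemma~\ref{lem:lb} and the classical element-distinctness lower bound.
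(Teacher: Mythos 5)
Your proposal is correct and follows essentially the same route as the paper: the same reduction from element distinctness using the construction preceding Lemma~\ref{lem:lb}, the same invocation of that lemma, and the same (routine) check that the reduction uses only bounded-degree arithmetic. The paper itself treats the theorem as an immediate consequence of Lemma~\ref{lem:lb}, exactly as you do.
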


\section{Prickliness computation in 2.5D terrains}
\label {sec:comp25D}

In this section, we consider the problem of computing the prickliness of a 2.5D terrain.
In Section~\ref {sub:Algorithm_for_TINs}, we present a simple quadratic-time algorithm for TIN terrains. 
In Section~\ref {sub:Algorithm_for_DEMs}, we discuss how to adapt it to DEM terrains; such an adaptation is needed to be able to run experiments for DEM terrains.
Finally, in Section~\ref  {sub:Lower_bound_2.5D}, we go back to TIN terrains, and provide evidence that the complexity of the algorithm in Section~\ref {sub:Algorithm_for_TINs} is close to the best possible.

\subsection{Algorithm for TINs} \label {sub:Algorithm_for_TINs}

%

We propose an algorithm that extends the idea from Section~\ref {sec:alg1} to \new{a 2.5D terrain $T$} as follows:
For every convex terrain vertex $v$, we compute the region of the unit sphere $\mathbb{S}^2$ containing all vectors  $\vec w$ such that $v$ is a local maximum of $T$ in direction  $\vec w$. As we will see, such a region is a cone, which we denote by $co(v)$. 
Furthermore, we denote the portion of $co(v)$ on the surface of $\mathbb{S}^2$ by $co_{\mathbb{S}^2}(v)$.

In order to compute $co(v)$, we consider all edges of $T$ incident to $v$. Let $e=vu$ be such an edge, and consider the plane orthogonal to $e$ through $v$. Let $H$ be the \new{open} half-space which is bounded by this plane and does not contain $u$. We translate $H$ so that the plane bounding it contains the origin; let $H_e$ be the intersection of the obtained half-space with the unit sphere $\mathbb{S}^2$. The following property is satisfied: For any unit vector $\vec w$ in $H_e$, the edge $e$ does not extend further than $v$ in direction $\vec w$. We repeat this procedure for all edges incident to $v$, and consider the intersection $co(v)$ of all the obtained half-spheres $H_e$. For any unit vector $\vec w$ in $co(v)$, none of the edges incident to $v$ extends further than $v$ in direction $\vec w$. Since $v$ is convex, this implies that $v$ is a local maximum in direction $\vec w$.

Once we know all regions of type $co(v)$, computing the prickliness of $T$ reduces to finding a unit vector that lies in the maximum number of such regions. To simplify, rather than considering these cones on the sphere, we extend them until they intersect the boundary of a unit cube $\mathbb{Q}$ centered at the origin. The conic regions of type $co(v)$ intersect the faces of $\mathbb{Q}$ forming (overlapping) convex regions. Notice that the problem of finding a unit vector that lies in the maximum number of regions of type $co(v)$ on $\mathbb{S}^2$ is equivalent to the problem of finding a point on the surface of  $\mathbb{Q}$ that lies in the maximum number of ``extended" regions of type $co(v)$. \new{The second problem can be solved by computing, for each face of the cube, the maximum overlap of convex regions (by constructing the arrangement induced by the convex regions and traversing its dual).}

 \new{For every convex vertex $v$, computing the intersection between the extended region $co(v)$ and the boundary of $\mathbb{Q}$ takes $O(\deg(v) \log(\deg(v)))$ time, where $\deg(v)$ is the degree of $v$ in $T$. Therefore, performing this operation for all convex vertices $v$ of $T$ takes time $O(\sum_v \deg(v) \log(\deg(v))) = O(2 |E| \log n) = O(n \log n)$ (where $|E|$ denotes the total number of edges of $T$). On the other hand, constructing the arrangement induced by the convex regions in each face of the cube and traversing its dual takes $O(n^2)$ time. We obtain the following:}


\begin{theorem} \label {thm:2D}
 The prickliness of a 2.5D terrain can be computed in $O(n^2)$ time. 
\end{theorem}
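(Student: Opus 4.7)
The plan is to follow the approach sketched above: for each convex vertex $v$ of $T$, build the spherical region $co_{\mathbb{S}^2}(v)\subseteq\mathbb{S}^2$ of directions in which $v$ is a local maximum, and then compute $\pi(T)$ as the maximum depth (point covered by the most regions) of the resulting collection on $\mathbb{S}^2$.

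First I would verify the geometric structure of $co(v)$. For each edge $e=vu$ incident to $v$, the set of unit vectors $\vec w$ for which $u$ does not lie above $v$ in direction $\vec w$ is a closed hemisphere $H_e$, bounded by the great circle orthogonal to $\overrightarrow{vu}$ and lying on the side opposite to $\overrightarrow{vu}$. Since $v$ is convex, $v$ is a local maximum in direction $\vec w$ if and only if $\vec w\in H_e$ for every edge $e$ incident to $v$. Hence $co_{\mathbb{S}^2}(v)=\bigcap_{e\ni v}H_e$ is a spherically convex region of combinatorial complexity $O(\deg(v))$, and the total complexity over all convex vertices is $O(\sum_v\deg(v))=O(n)$. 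All such regions can be constructed in $O(n\log n)$ time from the incidence structure of $T$.

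To turn the spherical max-depth problem into planar ones, I would centrally project the regions onto the six faces of a unit cube $\mathbb{Q}$ centered at the origin. On a fixed face of $\mathbb{Q}$, the intersection of the extension of the polyhedral cone $co(v)$ with the supporting plane of the face is a convex polygon (clipped to the face); a region that straddles several faces is split into that many convex pieces, and the total complexity remains $O(n)$. Any maximum-depth direction on $\mathbb{S}^2$ projects to a maximum-depth point on some face of $\mathbb{Q}$, so $\pi(T)$ equals the largest per-face planar maximum depth, which I can then report as the final answer.

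The main algorithmic step is then the planar max-depth problem: given an arrangement of $O(n)$ convex polygons of total complexity $O(n)$, find a point covered by the most polygons. I would apply the topological sweep of Edelsbrunner and Guibas~\cite{edels1989} to the arrangement on each face, maintaining the depth as the sweep processes segment endpoints and intersection events; this runs in $O(n^2)$ time and linear space, so summing over the six faces gives the claimed $O(n^2)$ bound. The part I expect to require the most care is the clipping and splitting of the cones at cube-face boundaries: I must ensure that a region $co(v)$ straddling several faces contributes to the depth on every face it reaches (so that any direction in $co(v)$ is counted wherever it lands), and that degenerate situations where $co(v)$ collapses to an arc or a point, or where boundary arcs of distinct $co(v)$ coincide, are handled without inflating the total combinatorial complexity beyond $O(n)$. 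Once these technicalities are absorbed into the construction of the planar arrangements, the topological-sweep bound carries over directly and yields Theorem~\ref{thm:2D}.
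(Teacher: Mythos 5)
Your proposal is correct and follows essentially the same route as the paper: computing each $co(v)$ as an intersection of hemispheres, centrally projecting the cones onto the faces of a unit cube, and finding the maximum overlap of the resulting convex regions via the topological sweep of Edelsbrunner and Guibas, for an overall $O(n^2)$ bound. The additional care you devote to regions straddling several cube faces is a technicality the paper leaves implicit, but it does not change the approach or the running time.
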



\subsection{Algorithm for DEMs}
\label{sub:Algorithm_for_DEMs}

The prickliness of a DEM terrain can be computed using the same
algorithm as for TINs: Each cell center can be seen as a vertex $v$, and its neighbors are the cell centers of its eight
neighboring cells. The edges connecting $v$ to its neighbors can then
be used to compute $co(v)$ as in Section~\ref{sub:Algorithm_for_TINs}, and the
rest of the algorithm follows. However, DEM terrains have significantly more vertices, and vertices
have on average more neighbors; this causes a significant increase in
computation time and, more importantly, in memory usage. For this
reason, in our experiments, the prickliness values for the DEM
terrains were approximated.

The approximated algorithm discretizes the set of vectors that are
candidates to achieve prickliness as follows: For every interior cell
$g$ of the terrain, we translate a horizontal grid $G$ of size $n$ by
$n$ and cell size $s$ above the cell center $v$ of $g$ in a way that
$v$ and the center of $G$ are vertically aligned and at distance
one. Then the vectors considered as potential prickliness are those
with origin at $v$ and endpoint at some cell center $c$ in $G$. For
any such vector lying inside $co(v)$, the value of $c$ gets
incremented by one. When all interior cells of the terrain have been
processed, a cell center of $G$ with maximum value gives the
approximated prickliness of the terrain. Cell size $s$ was set to
$0.05$, based on the spread of the results on TIN terrains. This
method should, in practice, produce a close approximation of
prickliness.


\subsection{Lower bound} \label {sub:Lower_bound_2.5D}


In this section we show that the problem of computing the prickliness of a 2.5D TIN terrain is 3SUM-hard.
This implies that our result in Theorem~\ref {thm:2D} is likely to be close to optimal: 
The best-known algorithm for \textsc{3sum} runs in $O(n^2(\log\log n)^{O(1)}/(\log n)^{2})$ time, and it is believed there are no significantly faster solutions \cite{DBLP:journals/talg/Chan20}.

We will reduce from the problem of covering a square by strips (defined below as \textsc{strips-cover-box}). In a nutshell, the idea is to map the square to a set of possible directions, and construct a terrain that has two vertices for each strip, such that one vertex is \new{a local maximum} for all directions on one side of the strip and the other vertex is \new{a local maximum} for all directions on the other side of the strip. Then,  the strips cover the square if and only if there is no direction such that, for every strip, one of the two associated vertices is \new{a local maximum} in that direction; that is, if and only if the prickliness is smaller than the number of strips.

\subsubsection {The \textsc{strips-cover-box} problem.}

Our reduction is from the problem \textsc{strips-cover-box}. In this problem, we are given a square $Q$ \new{in $\mathbb{R}^2$} and $n$  strips\footnote{A {\em strip} is the area between two parallel lines.} of infinite length. The problem is to answer the question: \new{Is every point in $Q$ contained in at least one strip?}
See Fig.~\ref {fig:strips-cover-box}a for an example.
Gajentaan and Overmars show that this problem is 3SUM-hard~\cite{gajentaan1995class}. 
\new{For completeness, we start by recalling some definitions and the precise statement of Gajentaan and Overmars.}

\begin{figure}[tb]
\begin{center}
    \includegraphics[page=1]{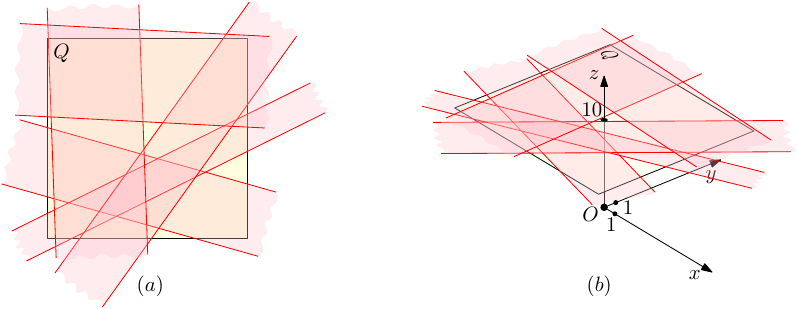}
  \end{center}
  \caption
  { (a) An instance of \textsc {strips-cover-box} (where the answer is NO).
    (b) Positioning the instance in space.
  }
  \label{fig:strips-cover-box}
\end{figure}

\begin {definition} [Definition 2.1 from~\cite {gajentaan1995class}]
  Given two problems \textsc{pr1} and \textsc{pr2}, we say that \textsc{pr1} is $f(n)$-solvable using \textsc{pr2} if and only if every instance of \textsc{pr1} of size $n$ can be solved using a constant number of instances of \textsc{pr2} of at most linear size and $O(f(n))$ additional time. 
  We denote this by
  \[
    \textsc{PR1} \lll_{f(n)} \textsc{PR2}.
  \]
\end {definition}

\begin {theorem} [combining Theorems 3.1, 3.2, and 6.1 from~\cite {gajentaan1995class}] \label {thm:strips-cover-box}
  \[
    \textsc{3SUM} \lll_{n \log n} \textsc{strips-cover-box}
  \]
\end {theorem}

Essentially, this implies that \textsc{strips-cover-box} is as hard as \textsc{3sum}, unless \textsc{3sum} can be solved faster than $\Omega(n \log n)$ time (which is considered highly unlikely~\cite{DBLP:journals/talg/Chan20}).

In the remainder of this section, we will set out to show that $\textsc{strips-cover-box} \lll_{n\log n} \textsc{prickliness}$.

\subsubsection {Mapping strips to directions.}

Now, assume we are given an instance of \textsc{strips-cover-box} of size $n$.
We will assume w.l.o.g. that $Q = [-1,1] \times [-1,1]$ is a
$2$ by $2$ square centered at the origin.
We will describe how to construct from the instance a set of terrain features that we call {\em curtains}.

\begin{figure}[tb]
\begin{center}
    \includegraphics[page=3]{3sum_hard_new_proof.pdf}
  \end{center}
  \caption
  { (a) An instance of \textsc {strips-cover-box}.
    (b) Strips are replaced by pairs of half-planes.
    (c)~Half-planes are colored based on \new{their orientation (but only if they contain at least one corner of~$Q$)}. 
    Half-planes that do not intersect $Q$ are ignored (gray).
  }
  \label{fig:half-planes}
\end{figure}

The complement of each
strip is given by two half-planes. 
\new{
Consider the $2n$ half-planes that are the complements of all the given strips, together with the square
$Q$.}
See Fig.~\ref {fig:half-planes}b for an example.
Answering the above question is equivalent to answering whether
there exists a point in $Q$ covered by $n$ of the half-planes: \new{a point
inside a strip lies in none of the two half-planes corresponding to the
strip, whereas a point outside the strip lies in exactly one of those
two half-planes. Hence, a point lies outside all $n$ strips if and only
if it lies in exactly $n$ half-planes.}
To restrict our attention to the points in $Q$, we simply create four additional half-planes aligned with the sides of $Q$ (and containing $Q$). Then a point lies in $Q$ if and only if it is contained in all four half-planes.
Let $\cal H$ be the resulting set of $2n + 4$ half-planes.

\begin{observation} \label{obs1:3sum}
Any point is covered by at most $n+4$ half-planes in $\cal H$. If a point is covered by exactly $n+4$ half-planes in $\cal H$, then it lies inside $Q$. Such a point exists if and only if the answer to the \textsc{strips-cover-box} instance is NO.
\end{observation}

Now, we lift the plane containing $Q$ \new{and all the lines bounding the strips to} the horizontal plane in $\mathbb{R}^3$ at $z=10$, see Fig.~\ref {fig:strips-cover-box}b.
We identify each point $(x,y)$ in $Q$ with the direction vector $\overrightarrow {(x,y,10)}$.

\subsubsection {Constructing a set of curtains.}

We associate each half-plane \new{in $\cal H$} with a {\em curtain}. A curtain $(p, \vec d)$ is defined by a point $p$ in $\mathbb R^3$ together with a direction $\vec d$, and consists of the ray from $p$ in direction $\vec d$ together with all points vertically below this ray. 
\begin{figure}[tb]
\begin{center}
    \includegraphics[page=2]{3sum_hard_new_proof.pdf}
  \end{center}
  \caption
  { (a) A single \new{half-plane} bounded by a line $\ell$, lifted to the plane $z=10$, and a \new{point $d'$} inside the \new{half-plane}.
    (b) The plane $A$ through $\ell$ and the origin $O$ and its normal vector $\vec d$.
    (c) The curtain below the halfline from the origin in direction $\vec d$. The vertex of the curtain (at the origin) is a local maximum in direction $\vec d'$.
  }
  \label{fig:curtains-0}
\end{figure}
Each half-plane $H \in \cal H$ is bounded by a line $\ell$, which lies in the horizontal plane $z=10$ and thus does not contain the origin.
To construct the curtains,  let $A$ be the plane through $\ell$ and the origin, and let $\vec d$ be the normal vector of $A$ pointing in the direction away from the interior of $H$. Then $({\bf 0}, \vec d)$ is the curtain for $H$.\footnote {Note that all curtains are constructed passing through the origin; later they will be translated.}
See Fig.~\ref {fig:curtains-0}.

\begin {observation}
A curtain has a single vertex.
This vertex is
\new{a local maximum} in all directions in the corresponding half-plane $H$,
and is not a local maximum in all directions outside $H$.
\end {observation}


Next, we color the curtains with a color out of the set $\{TL, TR, BL, BR\}$, \new{standing for \textit{top-left}, \textit{top-right}, \textit{bottom-left}, and \textit{bottom-right}, respectively, depending on the orientation of the associated half-plane}, as follows.
First, we discard all half-planes that do not not intersect $Q$ at all (since they will never contribute to a solution anyway).
For the remaining half-planes, those bounded by a line with a positive slope get assigned to $TL$ or $BR$, and those bounded by a line with a negative slope get assigned to $BL$ or $TR$, such that all half-planes with color $TL$ contain the top left corner of $Q$, all half-planes with color $TR$ contain the top right corner of $Q$, all half-planes with color $BL$ contain the bottom left corner of $Q$, and all half-planes with color $BR$ contain the bottom right corner of $Q$.
See Fig.~\ref {fig:half-planes}c for an example.
The curtains are colored correspondingly.

By assigning these colors to the curtains, we have grouped them into four groups whose rays point in similar directions. Specifically, we have the following property.

\begin {lemma} \label {lem:hitplanes}
  All curtains with color $TR$ that start above the plane $S: x + y + z = 1$ have a direction ray that intersects this plane.
  More specifically, a curtain $((\frac12, \frac12, \frac14), \vec d)$ with color $TR$ has a ray that hits the plane somewhere inside the trapezium $\nabla$ bounded by the points 
 $(\frac12, \frac14 - \frac{1 + 5\sqrt2}{196}, \frac14 + \frac{1 + 5\sqrt2}{196}),
  (\frac12, \frac14 + \frac{-1 + 5\sqrt2}{196}, \frac14 - \frac{-1 + 5\sqrt2}{196}),  
  (\frac14 + \frac{-1 + 5\sqrt2}{196}, \frac12, \frac14 - \frac{-1 + 5\sqrt2}{196}), \mathrm {and}\ 
  (\frac14 - \frac{1 + 5\sqrt2}{196}, \frac12, \frac14 + \frac{1 + 5\sqrt2}{196})
 $. 
\end {lemma}

\begin {proof}
  Consider a curtain $(p,\vec d)$ of color TR, that starts above the plane $S: x + y + z = 1$. Refer to Fig.~\ref {fig:curtains}. Since the curtain has color $TR$, the associated half-plane is bounded by a line with a negative slope, so it has equation $x+By+Cz=0$, with $B>0$. Since the half-plane contains the top left corner of $Q$, the normal vector $\vec d$ of the associated plane through the origin has equation $(-1,-B,-C)$. Hence, the ray bounding $(p,\vec d)$ intersects the plane $S$.
  
  Consider the infinite bundle $\beta$ of all possible curtains of color $TR$ that start in the point $p= (\frac12, \frac12, \frac14)$. We now argue that any curtain in this bundle intersects $S$ inside the trapezium $\nabla$.  
  Clearly, if the rays on the boundary of the bundle $\beta$ have this property, the interior rays will have it too.
  By the above argument, the vertical projection of the ray from $p$ in direction $\vec d$ makes an angle with the ray in direction $(-1, -1)$ of at most $45^\circ$.  
  This means that, the extremal rays corresponding to the curtains in $\beta$ are parallel to the $x$-axis or the $y$-axis.


  Let us consider the case where it is parallel to the $x$-axis; the other case is symmetric.
  If the projected ray is parallel to the $x$-axis, $\vec d$ is parallel to the $xz$-plane, and thus stays in the plane $y=\frac12$. In this plane, the plane $S$ corresponds to the line $x + z = \frac12$. 

  Now, since we lifted $Q$ to the height $z=10$, in the worst case, the vertical angle of the ray in direction $\vec d$ has a slope between $\frac{\sqrt2}{10}$ and $-\frac{\sqrt2}{10}$.\footnote{Note that, if the ray is parallel to the $x$ or $y$ axis in the vertical projection, the slope is actually restricted to a smaller interval between $\frac{1}{10}$ and $-\frac{1}{10}$, so the true region where it may intersect $S$ is even smaller than the trapezium we calculate here (but the true region is not a polygon, so we use the larger estimate for simplicity).}
A ray from the point $(\frac12, \frac14)$ with slope $\frac{\sqrt2}{10}$ intersects the line $x+z=\frac12$ in the point $q = (\frac14-\delta,\frac14+\delta \frac{\sqrt2}{10})$ which solves to $\delta = \frac{1 + 5\sqrt2}{196} \approx 0.041$ so $q \approx (0.209, 0.281)$.
A ray from the point $(\frac12, \frac14)$ with slope $-\frac{\sqrt2}{10}$ intersects the line $x+z=\frac12$ in the point $q' = (\frac14+\delta',\frac14-\delta \frac{\sqrt2}{10})$ which solves to $\delta = \frac{-1 + 5\sqrt2}{196} \approx 0.031$ so $q \approx (0.281, 0.219)$.\qed
\end {proof}

Symmetric to Lemma~\ref{lem:hitplanes} the rays bounding the curtains of color $TL$ intersect a small trapezium on the plane $-x+y+z = 1$, those of color $BL$ intersect a trapezium on the plane $-x -y + z = 1$,  and the rays bounding curtains of color $BR$ intersect a trapezium on the plane $x - y + z = 1$.

\begin{figure}[tb]
\begin{center}
    \includegraphics[page=5]{3sum_hard_new_proof.pdf}
  \end{center}
  \caption
  { (a) A curtain $(p, \vec d)$ in $TR$ will intersect the plane $x + y + z = 1$, and more specifically, when $p=(\frac12, \frac12, \frac14)$ the intersection point lies in the trapezium bounded by the four points
 $(\frac12, \frac14 - \frac{1 + 5\sqrt2}{196}, \frac14 + \frac{1 + 5\sqrt2}{196}),
  (\frac12, \frac14 + \frac{-1 + 5\sqrt2}{196}, \frac14 - \frac{-1 + 5\sqrt2}{196}),  
  (\frac14 + \frac{-1 + 5\sqrt2}{196}, \frac12, \frac14 - \frac{-1 + 5\sqrt2}{196}), \mathrm {and}
  (\frac14 - \frac{1 + 5\sqrt2}{196}, \frac12, \frac14 + \frac{1 + 5\sqrt2}{196})
 $
    (b) Top-down view.
    (c) Side view. \new{The axis denoted by $x+y$ is the axis in the direction (1,1,0), i.e., a side view with the camera position rotated by $45^\circ$ so that the green triangle becomes a line segment.}
  }
  \label{fig:curtains}
\end{figure}

Note that the shape of the trapezium from Lemma~\ref {lem:hitplanes} changes linearly with the location of $p$; i.e., if we move $p$ closer to the side of the plane, the trapezium will shrink, and if we move it parallel to the plane, the trapezium will translate.


\subsubsection {Building a terrain.}

In this section, we will build a degenerate terrain, by placing all of our curtains on top of a mountain.
The terrain is degenerate because it has vertical faces (the sides of the curtains) but will otherwise have exactly the properties that we require.

\begin{figure}[tb]
\begin{center}
    \includegraphics[page=4]{3sum_hard_new_proof.pdf}
  \end{center}
  \caption
  { (a) The base of the terrain is a pyramid with four sides in the colors $\{TL, TR, BL, BR\}$.
    (b) We place the curtains on the sides of the pyramid with the same color.
    (c) Top-down view of the terrain.
  }
  \label{fig:terrainconstruction}
\end{figure}

The base of our terrain will be a pyramid whose sides lie on the four planes $x+y+z=1$, $x-y+z=1$, $-x+y+z=1$ and $-x-y+z=1$ (i.e.~the plane from Lemma~\ref{lem:hitplanes} and the corresponding symmetric planes). See Fig.~\ref {fig:terrainconstruction}a.



\begin {observation} \label {obs:sharp-top}
  The top of the pyramid is a local maximum for every direction in $Q$.
\end {observation}

This observation implies that the number of local maxima of our terrain in a fixed direction $\vec d$ will be $1$ higher than the number of half-planes that $\vec d$ is contained in for all directions $\vec d$ in $Q$. 
(Recall that for directions outside $Q$, the number of local maxima can be at most $n$, so they are irrelevant for our reduction.



Next, we attach the curtains to the sides of the pyramid.
For this, we place the vertices of the curtains close to and above the
sides of the pyramid and sufficiently far from each other, so they do
not interfere with each other, i.e., they do not intersect each other before they intersect the pyramid. By Lemma~\ref {lem:hitplanes}, \new{if we place the vertices of the curtains close enough to the
sides of the pyramid,} the rays of the curtains
all intersect the side of the pyramid, and clearly the vertical ray
down will also hit the side of the pyramid. 
\new{Finally, we clip the curtains so that the ray
does not stick out through the other side.}
See Fig.~\ref {fig:terrainconstruction}b and Fig.~\ref {fig:terrainconstruction}c.

\begin {lemma}
  The degenerate terrain $T$ we have constructed has prickliness $n+5$ if and only if there is a point in $Q$ that is {\em not} contained in any of the $n$ strips.
\end {lemma}

\begin {proof}
  By construction, our terrain $T$ \new{consists of the base pyramid, with 5 vertices, and $2n+4$ curtains}. Thus, it has (at most) $3(2n+4) + 5$ vertices\footnote {Or less, if some half-planes were deleted because they did not intersect $Q$.}. Of these, $4n+8$ vertices are concave (where the rays or vertical rays of the curtains intersect the pyramid), so these will never be local maxima. The four external vertices are not considered in our definition of prickliness, and the fifth is the top of the pyramid, which is always a local maximum by Observation~\ref{obs:sharp-top}. 
  So that leaves $2n + 5$ vertices \new{that can be local maxima}.

  \new{Out of the $2n+4$ curtain vertices, $2n$ come in pairs, corresponding} to the two half-planes bounding the same strip. The set of directions in those two half-planes is disjoint, so only one \new{from each} pair can be a local maximum for a given direction. 
  That means that the prickliness can never be more than $n+5$.

  Now, if the number of local maxima is equal to $n+5$ for some direction $\vec d$, that means the top of the pyramid is a local maximum and that all four curtains corresponding to half-planes bounding $Q$ are also a local maximum (so $\vec d$ is in $Q$), and one of each pair of curtains is also a local maximum ($\vec d$ is in one of the half-planes bounding a strip, and thus, not contained in the strip). So $\vec d$ is a witness to the \textsc {strips-cover-box} being a NO-instance. The argument works both ways: if we have a NO-instance, then the prickliness will be exactly $n+5$.
\end {proof}

\subsubsection {Coordinates, precision, and degeneracy.}

We need to specify the coordinates of the curtains to ensure they do not interfere. It is essential that we do so without needing to explicitly test their interactions 
(as we cannot afford to spend time on checking all pairs). So, we place the curtains so that the trapeziums as in Lemma~\ref {lem:hitplanes} are sufficiently small. 

Let $H_1, H_2, \ldots H_m$ be our set of half-planes of color $TR$. We place the apices of the curtains at coordinates $(\frac {i}{m+1}, \frac {m+1-i}{m+1}, \frac {1}{4m})$. Then the resulting trapeziums from Lemma~\ref {lem:hitplanes} are scaled down by a factor $\frac1m$ and translated to be disjoint on the surface of the pyramid. We handle the other three colors symmetrically. 

Note that the precision of coordinates required in the reduction is not significantly greater than the precision of the input numbers, which implies that our reduction also holds in other models of computation than the Real RAM.

Finally, the above construction results in a degenerate terrain. In order to lift the degeneracy, we replace each curtain by a slightly expanded curtain as follows. For a curtain $(p, \vec d)$, we still create a vertex at the point where the ray in direction $\vec d$ intersects the pyramid, but we no longer create a vertex directly below $p$. Instead, we shoot two rays from $p$ almost straight down, resulting in {\em two} vertices, one on each side of the curtain. For curtains with color $TR$, the two downward rays are in directions $(0, 1, -100)$ and $(1, 0, -100)$; the other colors are handled symmetrically. Note that this does influence the set of directions in which $p$ is a local maximum: it is no longer a half-plane, but the intersection between three half-planes (i.e., a triangle), but since the other two sides of the triangle do not intersect $Q$, the region that is inside such a triangle {\em and} within $Q$ remains the same as it was for the half-plane.

\begin{theorem}
  The problem of computing the prickliness of a 2.5D terrain is 3SUM-hard. Specifically:
  \[
    \textsc{3sum} \lll_{n \log n} \textsc{prickliness}.
  \]
\end{theorem}

\begin {proof}
  We have just shown that $\textsc{strips-cover-box} \lll_{n} \textsc{prickliness}$.
  The result now follows from Theorem~\ref {thm:strips-cover-box}, which state that $\textsc{3sum} \lll_{n \log n} \textsc{strips-cover-box}$.
\end {proof}

\part{Experiments}

\section{Existing topographic attributes}
\label{sec:Topographic_attributes}


Since one of the goals of our experiments is to verify our hypothesis that existing topographic attributes do
  not provide a good indicator of the viewshed complexity, in our experiments we consider the following topographic
attributes (in addition to prickliness).


\newcommand{\pTRI}{\ensuremath{\mathit{pTRI}}\xspace}
\newcommand{\TRI}{\ensuremath{\mathit{TRI}}\xspace}
\newcommand{\pTSI}{\ensuremath{\mathit{pTSI}}\xspace}
\newcommand{\TSI}{\ensuremath{\mathit{TSI}}\xspace}
\newcommand{\FD}{\ensuremath{\mathit{FD}}\xspace}
\newcommand{\pFD}{\ensuremath{\mathit{pFD}}\xspace}
\newcommand{\pSKI}{\ensuremath{\mathit{pSKI}}\xspace}
\newcommand{\SKI}{\ensuremath{\mathit{SKI}}\xspace}

\paragraph{Terrain Ruggedness Index (TRI)} The Terrain Ruggedness
Index measures the variability in the elevation of the
terrain~\cite{Riley1999TRI}. Riley \etal originally defined TRI
specifically for DEM terrains as follows. Let $c$ be a cell of the terrain, and let $\mathcal{N}(c)$ denote the set of (at most) eight neighboring cells of $c$.
The TRI of $c$ is then defined as
 \[  \pTRI(c) = \sqrt{\frac{1}{\lvert \mathcal{N}(c)\rvert}\sum_{q \in \mathcal{N}(c)} \left(c_z - q_z\right)^2},\]
where $ \lvert \mathcal{N}(c)\rvert$
denotes the cardinality of $\mathcal{N}(c)$. Hence, $\pTRI(c)$ essentially
measures the standard deviation of the difference in height between
$c$ and the points in $\mathcal{N}(c)$. The Terrain
Roughness Index $\TRI(T)$ of $T$ is the average
$\pTRI(T,c)$ value over all cells $c$ in $T$. 

For TINs, we have adapted the definition as follows: $\mathcal{N}(c)$ is defined as the set of vertices which are adjacent to a given vertex $c$ in $T$. The Terrain Roughness Index $\TRI(T)$ is then obtained as the average
$\pTRI(T,c)$ value over all vertices $c$ of $T$.

\paragraph{Terrain Shape Index (TSI)} The Terrain Shape Index also
measures the ``shape'' of the terrain~\cite{McNab1989TSI}. 
Let
$\mathcal{C}(c,r,T)$ denote the intersection of $T$ with a vertical
cylinder of radius $r$ centered at $c$ (so after projecting all points
to the plane, the points in $\mathcal{C}(c,r,T)$ lie on a circle
of radius $r$ centered at $c$). For ease of computation, we discretize
$\mathcal{C}(c,r,T)$: for DEMs we define $C(c,r,T)$ to be the
grid cells intersected by $\mathcal{C}(c,r,T)$, and for TINs
we define $C(c,r,T)$ as a set of 360 equally spaced points on
$\mathcal{C}(c,r,T)$. The TSI of a point $c$ is then defined as
\[
  \pTSI(c,r,T) = \frac{1}{r	\lvert C(c,r,T)	\rvert } \sum_{q \in C(c,r,T)} c_z -
  q_z\ ,
\]
and essentially measures the average difference in height between
``center'' point $c$ and the points at (planar) distance $r$ to $c$,
normalized by $r$. The Terrain Shape Index $\TSI(T,r)$ of the entire
terrain $T$ is the average $\pTSI(c,r,T)$ over all cells (in case of a
DEM) or vertices (in case of a TIN) of $T$. We choose $r=1000$m (which
is roughly eight percent of the width of our terrains) in our
experiments.

\paragraph{Fractal dimension (FD)} The (local) fractal dimension
measures the roughness around a point $c$ on the terrain over various
scales~\cite{Mandelbrot1982,Taud2005Fractal}. We use the definition of
Taud and Parrot~\cite{Taud2005Fractal} that uses a box-counting
method, and is defined as follows. Let $w$ be the width of a cell in
the DEM, and let $s \in \mathbb{N}$ be a size parameter. For
$q \in 1..s/2$, consider subdividing the cube with side length $sw$
centered at $c$ into $(s/q)^3$ cubes of side length $qw$. Let
$\mathcal{C}_s(c,q)$ denote the resulting set of cubes, and define
$N_s(c,q,T)$ as the number of cubes from $\mathcal{C}_s(c,q)$ that
contain a ``unit'' cube from $\mathcal{C}_s(c,1)$ lying fully below
the terrain $T$. Let $\ell_s(c,T)$ be the linear function that best
fits (i.e. minimizes the sum of squared errors) the set of points
$\{(\ln(q),\ln(N_s(c,q,T))) \mid q \in 1..s/2\}$ resulting from those
measurements. The fractal dimension $\pFD_s(c,T)$ at $c$ is then
defined as the inverse of the slope of $\ell_s(c,T)$. The fractal
dimension $\FD(T,s)$ of the DEM terrain itself is again the average
over all DEM cells. Following Taud and Parrot we use $s=24$ in our
experiments. For our TIN terrains, we keep $w$ the same as in their
original DEM representations, and average over all vertices.

\section{Experiments}
\label{sec:experiments}

In this section we present our experimental setup. Our goals are to
\begin{itemize}
\item verify our hypothesis that existing topographic attributes do
  not provide a good indicator of the viewshed complexity,
\item evaluate whether prickliness \emph{does} provide a good
  indicator of viewshed complexity in practice, and
\item evaluate whether these results are consistent for DEMs and TINs.
\end{itemize}
Furthermore, since in many applications we care about the visibility
of multiple viewpoints (e.g. placing guards or watchtowers), we also
investigate these questions with respect to the complexity of the
common viewshed of a set of viewpoints. In this setting a point is
part of the (common) viewshed if and only if it can be seen by at
least one viewpoint. Note that since Theorem~\ref{thm:2.5comp}
\emph{proves} that the complexity of a viewshed is proportional to the
prickliness, our second goal is mainly to evaluate the practicality of prickliness. That is, to establish if this relation
is also observable in practice or that the hidden constants in the
big-O notation are sufficiently large that the relation is visible
only for very large terrains.

Next, we briefly describe our implementations of the topographic
attributes. We then outline some basic information about the terrain
data that we use as input, and we describe how we select the
viewpoints for which we compute the viewsheds.

\paragraph{Implementations} We consider prickliness\footnote{For technical reasons, our implementation considers a vertex to be a local maximum if all its neighboring vertices are lower than the vertex (rather than at a height that is lower than or equal to that of the vertex), but that does not make a difference for the terrains considered.} and the
topographic attributes from
Section~\ref{sec:Topographic_attributes}. To compute the prickliness
we implemented the algorithm from Theorem~\ref{thm:2D} in C++
using CGAL~5.0.2~\cite{CGAL:CGAL} and its \emph{2D
arrangements}~\cite{CGAL:2dArrangements} library.\footnote
{ Source code available from \url {https://github.com/GTMeijer/Prickliness}. }
We also implemented
the algorithms for TRI, TSI, and FD on TINs. These are mostly
straightforward. To compute the viewsheds on TINs we implemented the
hidden-surface elimination algorithm of~\cite{Goodrich1992Viewshed} using CGAL.\footnote
{ Source code available from \url {https://github.com/GTMeijer/TIN_Viewsheds}.}
We remark that our implementation computes the exact TIN viewsheds, as opposed to previous studies that only considered fully visible triangles (e.g.,~\cite{RD-CausesError-07}).

To compute the prickliness on DEMs we used the algorithm from Section~\ref{sub:Algorithm_for_DEMs}. For TRI, TSI,
and FD on DEMs we used the implementations available in
ArcGIS~Pro~2.5.1~\cite{ArcGISPro}. 
To compute viewsheds on DEMs we used the builtin tool ``Viewshed 2'' in ArcGIS with a vertical offset of 1 meter, which produces a raster with boolean values that indicate if a cell is visible or not. To get a measure of
complexity similar to that of the TINs we use the ``Raster to
Polygon'' functionality of ArcGIS (with its default settings) to convert the set of \textsc{True}
cells into a set of planar polygons (possibly with holes). We use the
total number of vertices of these polygons as the complexity of the
viewshed on a~DEM.

\paragraph{Terrains} We considered a collection of 43 real-world
terrains around the world.  These terrains were handpicked in order to
cover a large variety of landscapes, with varying ruggedness,
including mountainous regions (Rocky mountains, Himalaya), flat areas
(farmlands in the Netherlands), and rolling hills (Sahara), and different
complexity. We obtained the terrains from the United States Geological Survey (USGS),
and converted them through the \emph{Terrain} world
elevation layer \cite{ESRITerrain} in  ArcGIS~\cite{ArcGISPro};
all terrains use the WGS 1984 Web Mercator (auxiliary sphere) map projection. 
Each terrain is represented as a 10-meter resolution DEM of size  1400$\times$1200. According to past studies the chosen resolution
of 10 meters provides the best compromise between high resolution and
processing time of measurements~\cite{Maynard2014DEMRes,Zhang1994DemRes}. The complete list of terrains with their extents
can be found in~\cite{thesis/gert}.

We generated a TIN terrain for each DEM using the \emph{``Raster to
  TIN''} function in ArcGIS~\cite{ArcGISPro}. This function
generates a Delaunay triangulation to avoid long, thin triangles as
much as possible.  With the \emph{$z$-tolerance} setting, the
triangulation complexity can be controlled by determining an allowed
deviation from the DEM elevation values. We considered TIN terrains
generated using a $z$-tolerance of 50 meters. This resulted in TINs
where the number of vertices varied between 30 and 3304 (with an
average of 1125 vertices). Their distribution can be seen in Fig.~\ref{fig:terrains_prickliness}.

\begin {figure}[tb]
  \centering
  \includegraphics[clip,trim=0 0 0 0.7cm]{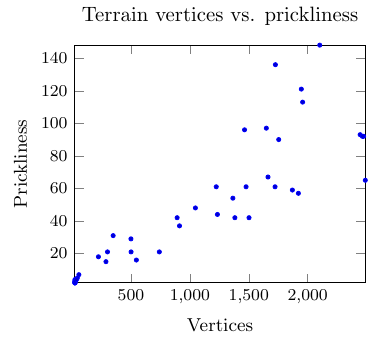}
  \caption {The prickliness values for the terrains we considered.}
  \label{fig:terrains_prickliness}
\end {figure}



\paragraph{Viewpoints} Kim \etal~\cite{Kim2004ViewpointLocations}
found that placing the viewpoints at peaks typically produces
viewsheds that cover hilltops, but not many valleys, whereas placing
viewpoints in pits typically covers valleys but not hilltops. 
This leads us to consider three different strategies to pick the locations
of the viewpoints: picking ``high'' points (to cover peaks), picking
``low'' points (to cover valleys), and picking viewpoints uniformly at
random. To avoid clusters of high or low viewpoints we overlay an
evenly spaced grid on the terrain, and pick one viewpoint from every
grid cell (either the highest, lowest, or a random one). We pick
these points based on the DEM representation of the terrains, and
place the actual viewpoints one meter above the terrain to avoid
degeneracies. We use the same locations in the TINs in order to
compare the results between TINs and DEMs (we do recompute the
$z$-coordinates of these points so that they remain 1m above the surface of the TIN). 
The resulting viewsheds follow the expected pattern; refer to Fig.~\ref {fig:coverage}.
In our experiments, we consider both
the complexity of a viewshed of a single viewpoint as well as the
combined complexity of a viewshed of nine viewpoints (picked from a
$3\times3$ overlay grid). Results of Kammer \etal~\cite{klms-papgpt-14}
suggest that for the size of terrains considered these viewpoints
already cover a significant portion of the terrain, and hence picking
even more viewpoints is not likely to be informative.

\begin{figure}[h]%
	\centering
	\includegraphics [angle=90,scale=0.4] {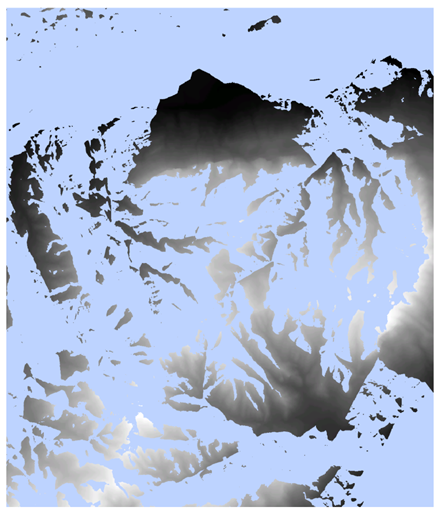}
	\includegraphics [angle=90,scale=0.4]{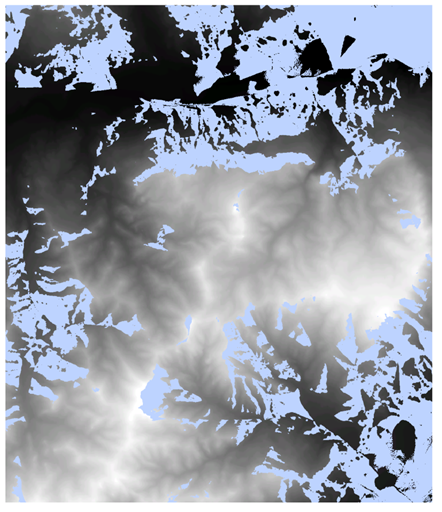}
	\caption{
		(left) A joint viewshed (blue) created from viewpoints placed on the highest points. (right) A joint viewshed (blue) created from viewpoints placed on the lowest points.
	}%
	\label{fig:coverage}%
\end{figure}

\paragraph{Analysis} For each  topographic attribute we
consider its value in relation to 
the complexity of the viewsheds.
In addition, we
test if there is a correlation between the viewshed complexity and the
attribute in question. We compute their sample correlation coefficient
(Pearson correlation coefficient) $R$ to measure their (linear)
correlation. The resulting value is in the interval $[-1,1]$, where a
value of $1$ implies that a linear increase in the attribute value
corresponds to a linear increase in the viewshed complexity. A value
of $-1$ would indicate that a linear increase in the attribute leads
to a linear decrease in viewshed complexity, and values close to zero
indicate that there is no linear correlation. 

\section{Results}
\label{sec:Results}

We start by investigating the prickliness values compared to the
complexity of the terrains considered. These results are shown in
Fig.~\ref{fig:terrains_prickliness}. We can see that the prickliness
is generally much smaller than the number of vertices in the (TIN
representation of the) terrain. In Fig.~\ref{fig:real} we also see
the $\pi_{\vec{v}}$ values for orientation vectors near $(0,0,1)$
(recall that the maximum over all orientations defines the
prickliness).

\begin {figure}[tb]
  \centering
  \includegraphics [scale=.72] {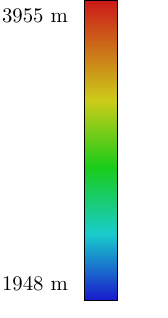}
  \includegraphics [scale=.14] {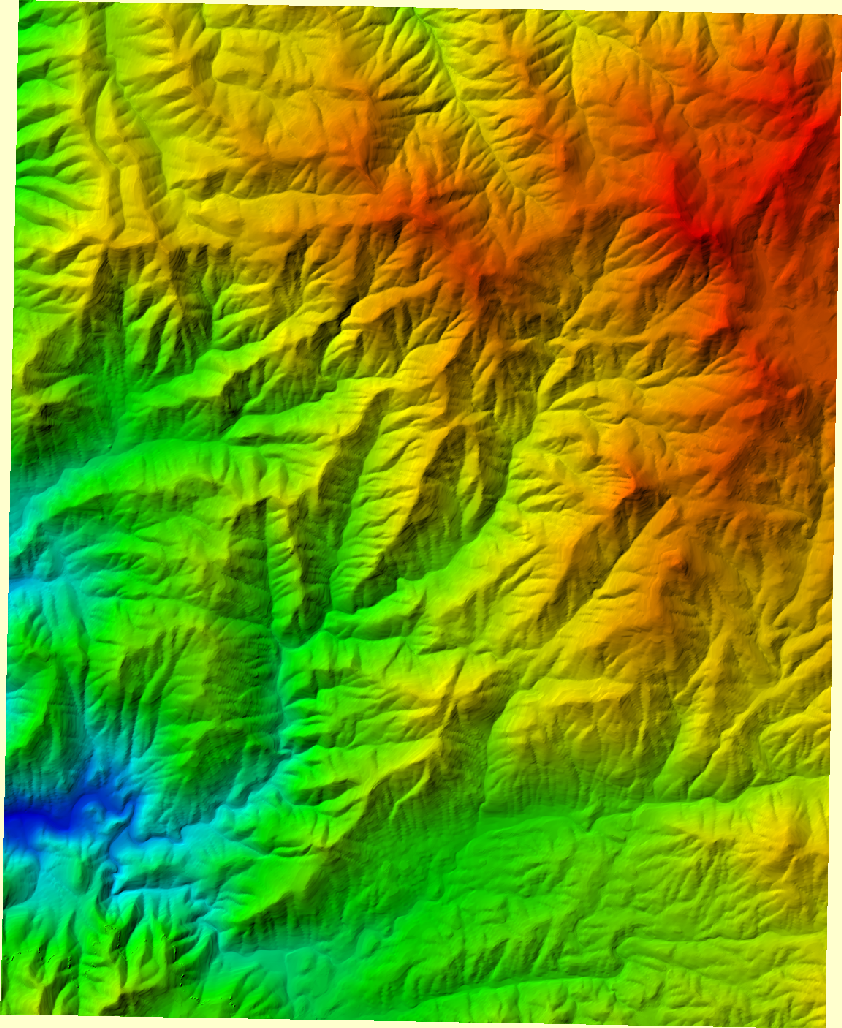}
  \qquad
  \includegraphics [scale=.72] {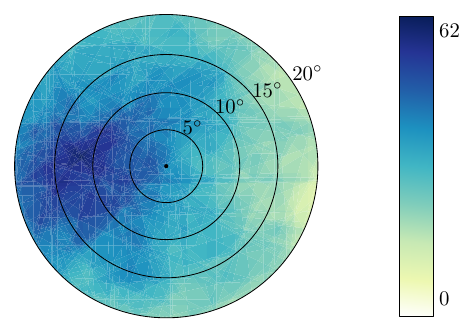}
  \caption {(left) A real-world terrain with $583$ vertices from the
    neighborhood of California Hot Springs whose prickliness is only
    $62$. (right) The value $\pi_{\vec{v}}$ for vectors near
    $(0,0,1)$.}
  \label{fig:real}
\end {figure}


Next, we analyze the relation between topographic attributes and viewshed complexity. For each of the terrains we compute the viewshed of one or nine
viewpoints, for three viewpoint placement strategies, and analyze the
complexity of the viewshed as a function of the topographic attributes
both for TINs and DEMs. Table~\ref{table:combined} summarizes the 
correlation between the attributes and the viewshed complexity for
each case.

 \setlength{\tabcolsep}{4.2pt}
	\begin{table}[tb]
		\begin{tabular*}{\linewidth}{l *{12}{r} }
          \toprule
            & \multicolumn{6}{c}{TIN} & \multicolumn{6}{c}{DEM}\\
			& \multicolumn{2}{c}{highest} 
			& \multicolumn{2}{c}{lowest} 
			& \multicolumn{2}{c}{random} 
			& \multicolumn{2}{c}{highest} 
			& \multicolumn{2}{c}{lowest} 
			& \multicolumn{2}{c}{random} \\
			\cmidrule(lr){2-3}\cmidrule(lr){4-5}\cmidrule(lr){6-7}
			\cmidrule(lr){8-9}\cmidrule(lr){10-11}\cmidrule(lr){12-13}
			& \centercell{\single {single}}
			& \centercell{\multi  {multi}} 
			& \centercell{\single {single}}
			& \centercell{\multi  {multi}} 
			& \centercell{\single {single}}
			& \centercell{\multi  {multi}} 
			& \centercell{\single {single}}
			& \centercell{\multi  {multi}} 
			& \centercell{\single {single}}
			& \centercell{\multi  {multi}} 
			& \centercell{\single {single}}
			& \centercell{\multi  {multi}} \\
			\cmidrule(lr){2-7}\cmidrule(lr){8-13}
			Prick & \single{ 0.75} & \multi{ 0.97} & \single{ 0.41} & \multi{ 0.83} & \single{ 0.64} & \multi{ 0.93} 
		  		  & \single{ 0.62} & \multi{ 0.90} & \single{ 0.09} & \multi{ 0.18} & \single{ 0.16} & \multi{ 0.63} \\
			TRI   & \single{ 0.45} & \multi{ 0.58} & \single{ 0.69} & \multi{ 0.72} & \single{ 0.58} & \multi{ 0.66} 
      			  & \single{-0.53} & \multi{-0.37} & \single{-0.30} & \multi{-0.35} & \single{-0.37} & \multi{-0.42} \\
			TSI	  & \single{ 0.47} & \multi{ 0.66} & \single{ 0.75} & \multi{ 0.79} & \single{ 0.58} & \multi{ 0.74} 
		  		  & \single{-0.52} & \multi{-0.39} & \single{-0.26} & \multi{-0.28} & \single{-0.33} & \multi{-0.42} \\
			FD    & \single{-0.56} & \multi{-0.71} & \single{-0.68} & \multi{-0.80} & \single{-0.61} & \multi{-0.77} 
		  		  & \single{ 0.02} & \multi{-0.48} & \single{ 0.27} & \multi{ 0.23} & \single{ 0.24} & \multi{-0.18} \\
			\bottomrule
		\end{tabular*}
		\caption{The correlation coefficients ($R$ values) between the attributes and viewshed complexity.}
		\label{table:combined}
	\end{table}

\paragraph{TIN results}

In this case we distinguish between single and multiple viewpoints.

For single viewpoints, the first row in
Fig.~\ref{fig:TIN_results} shows the full results for a randomly
placed viewpoint on the TIN. Somewhat surprisingly, we see that
terrains with high fractal dimension have a low viewshed
complexity. For the other measures, higher values tend to
correspond to higher viewshed complexities. However, the scatter plots
for TRI and TSI show a large variation.
The scatter plots for the other placement
strategies (highest and lowest) look somewhat similar (see Fig.~\ref{fig:TIN_results_highest} and~\ref{fig:TIN_results_lowest} in Appendix~\ref{sec:app}), hence the strategy
with which we select the viewpoints does not seem to have much
influence in this case. None of the four attributes shows a strong
correlation in this case (see also Table~\ref{table:combined}).
Prickliness shows weak-medium correlation in three out of six cases,  strong correlation for one case---viewpoints at highest points---and no correlation for two cases with viewpoints at lowest points.
The other attributes show an even weaker correlation in general.

\begin{figure}[tb]
  %
  %
  \hspace{-1em}
  \begin{tabularx}{\linewidth}{cccc}
    ~~~~~Prickliness & ~~~FD & ~~TRI & ~~TSI \\
      \includegraphics[height=2.2cm]{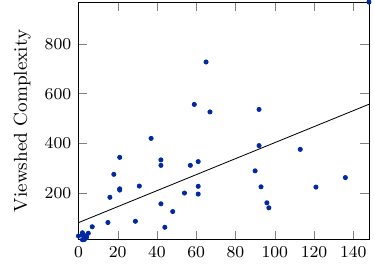}
	& \includegraphics[height=2.2cm]{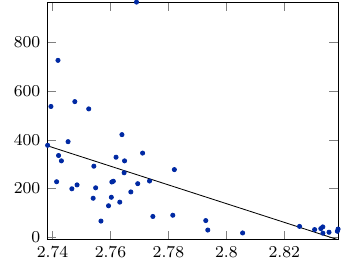}
	& \includegraphics[height=2.2cm]{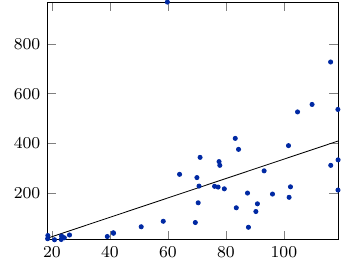}
	& \includegraphics[height=2.2cm]{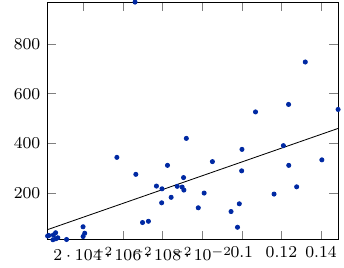}
	\\	
	  \includegraphics[height=2.2cm]{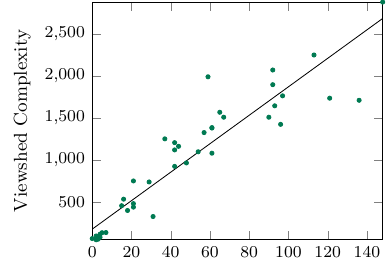}
	& \includegraphics[height=2.2cm]{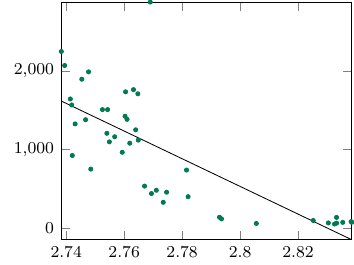}
	& \includegraphics[height=2.2cm]{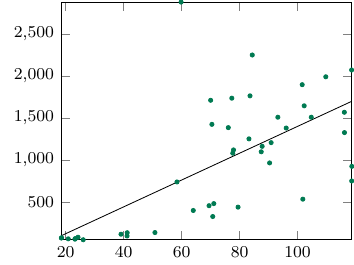}
	& \includegraphics[height=2.2cm]{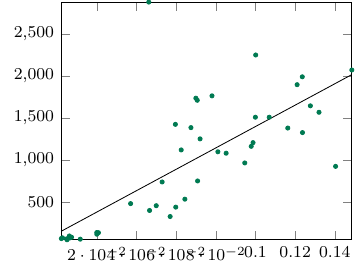}	
  \end{tabularx}
  \caption{The viewshed complexity on a TIN. First row: single random
    viewpoint. Second row: common viewshed of multiple (nine, selected
    from a $3 \times 3$ overlay) random viewpoints.}
  \label{fig:TIN_results}
\end {figure}

For multiple viewpoints, 
selected from a $3 \times 3$ overlay grid (refer
to Section~\ref{sec:experiments}), the results are presented in the second row of
Fig.~\ref{fig:TIN_results}.  Again, fractal dimension shows an inverse
behavior.
In contrast, the other three attributes show now a much clearer positive correlation
with viewshed complexity.  
In this case the prickliness shows the
strongest correlation in all but one case (that of viewpoints at
lowest points). In particular when placing the viewpoints at highest
points within the overlay grids the correlation is strong. See also
Fig.~\ref{fig:highest} (left).

\begin{figure}[tb]
  \centering
  \hspace{-1em}
  \begin{tabularx}{\linewidth}{cccc}
    ~~~~~Prickliness & ~~~FD & ~~TRI & ~~TSI \\
    \includegraphics[height=2.7cm,valign=t]{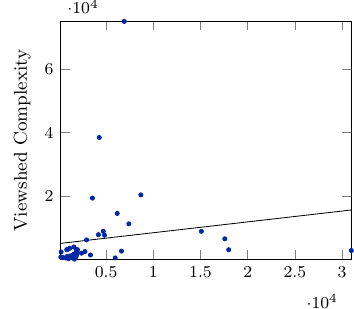}
  & \includegraphics[height=2.5cm,valign=t]{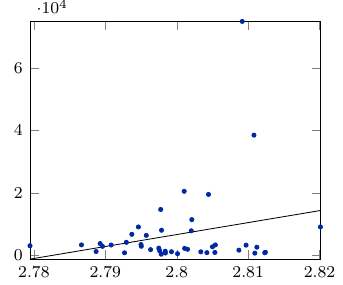}
  & \includegraphics[height=2.5cm,valign=t]{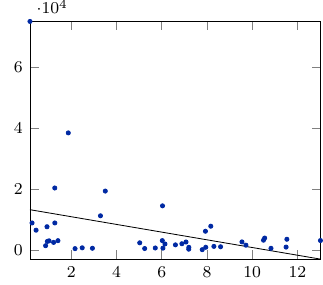}
  & \includegraphics[height=2.5cm,valign=t]{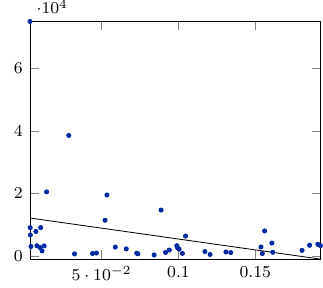}
  \vspace{0.2cm}
  \\
	\includegraphics[height=2.7cm,valign=t]{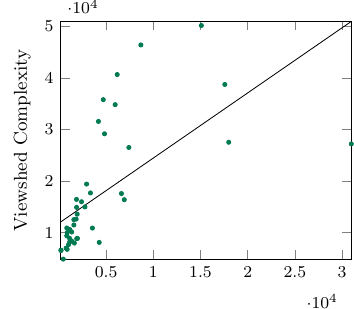}
  & \includegraphics[height=2.5cm,valign=t]{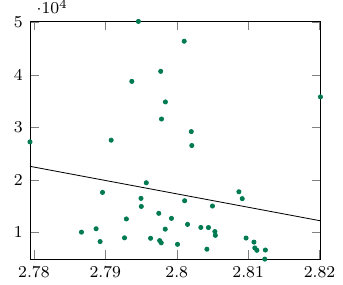}
  & \includegraphics[height=2.5cm,valign=t]{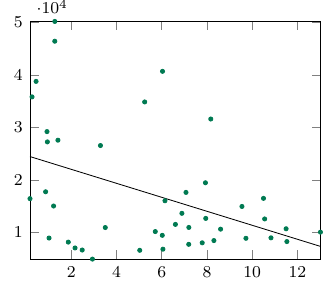}
  & \includegraphics[height=2.5cm,valign=t]{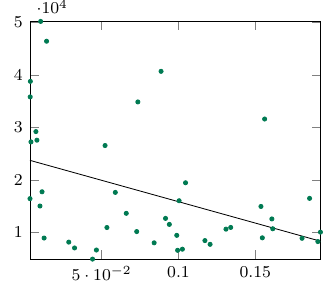}
  \vspace{0.1cm}
  \end{tabularx}
  \caption{The viewshed complexity on a DEM. First row: single random
    viewpoint. Second row: common viewshed of multiple (nine, selected
    from a $3 \times 3$ overlay) random viewpoints. 
}
  \label{fig:DEM_results}
\end {figure}

\begin{figure}[tb]
  \centering
  \begin{tabular}{cc}
    \quad TIN & \quad DEM \\
  \includegraphics[width=0.3\textwidth]{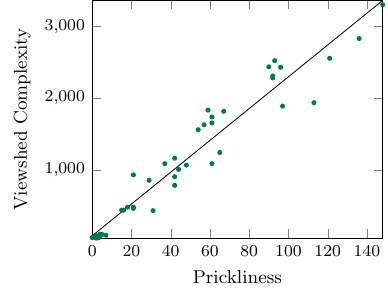}
  &
  \includegraphics[width=0.28\textwidth]{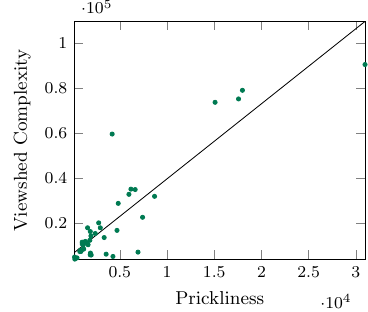}
  \end{tabular}
  \caption{The complexity of the common viewshed of nine viewpoints
    placed using the highest strategy on a TIN (left) or on a DEM
    (right).}
  \label{fig:highest}
\end{figure}


\paragraph{DEM results} 
In this case, the results for placing a single viewpoint and placing
the viewpoints in a $3 \times 3$ grid appear somewhat similar. Since
the results for the $3 \times 3$ grid are more pronounced, we focus on
those results.

For viewpoints placed randomly, 
the first row of plots in Fig.~\ref{fig:DEM_results} corresponds to the complexity of a single viewshed,
whereas the second row corresponds to the complexity of the common
viewshed of nine viewpoints selected from a $3 \times 3$ overlay.
The correlation values obtained in this setting are shown in the two rightmost columns of Table~\ref{table:combined}.
In contrast to TINs, for DEMs all measures show no to weak correlation
values, even though prickliness obtains the highest correlation coefficient in the case of multiple viewpoints (0.63). 

For viewpoints placed at lowest points the correlation values are very low, both for single and multiple viewpoints (the scatter plots are shown in Fig.~\ref{fig:DEM_results_lowest} in Appendix~\ref{sec:app}).

In the case of viewpoints placed at highest points, correlation results are still weak for all measures except for prickliness (see the scatter plots in Fig.~\ref{fig:DEM_results_highest} in Appendix~\ref{sec:app}). Here prickliness shows a moderate correlation for single viewpoints, but a very high correlation for multiple viewpoints (coefficient 0.90). 
However, the scatter plot (Fig.~\ref{fig:highest} right), where most of the plot points lie in a small region of the plot, suggests that this value may not be very  meaningful.

Finally, it should also be noted that prickliness is not the only attribute that has very different behavior between TINs and DEMs; indeed, the other three attributes also show a very large variation between these two types of terrain models.

\section{Discussion}
\label{section:discussion}

The experimental results for TINs confirm our hypotheses. 
We can see a clear correlation between the viewshed complexity and the prickliness,  especially when multiple viewpoints are
placed on the highest points.
In contrast, this is not evident for the other three topographic attributes considered.
The terrain ruggedness index (TRI) and terrain shape index (TSI)
show some very weak positive correlation, but not as strong as prickliness. 
This could be explained by the fact that
{TRI} and {TSI} only consider a fixed neighborhood around each point, making them local measures unable to capture the whole viewshed complexity. 
Indeed, a small (local) obstruction can be enough to significantly alter the value for any of these attributes.
The fractal dimension (FD) seems to be even worse at predicting viewshed complexity.
Unlike TRI and TSI, this topographic attribute considers the
variability within an area of the terrain as opposed to a
fixed-radius neighborhood. 
Taking a closer look at the FD values for both
terrain datasets shows a minimal variation, with most of them being
close to 3.0, which, according to Taud \etal~\cite{Taud2005Fractal},
indicates a nearly-constant terrain. These results seem to indicate that
this measure fails to detect the variation in elevation levels with
the chosen parameters.

The situation for DEM terrains is less clear.
Only for viewsheds originating from the highest
points we see a strong correlation between prickliness and viewshed complexity.
When the viewpoints are placed at the lowest points of the
DEM terrains, the correlation disappears.  
Since the prickliness measures the amount of
peaks in the terrain in all possible (positive) directions, 
this
means that when a viewpoint is placed at the highest elevation and the
viewshed gets split up by the protrusions (which seem to be accurately
tracked by prickliness), there is a strong correlation.  However, when
the viewpoints are placed at the lowest points, the viewsheds become
severely limited by the topography of the terrain surrounding them.
Even when placing multiple viewpoints, these viewsheds do not
seem to encounter enough of the protrusions that are detected by the
prickliness measure for viewpoints placed at high points.  

One possible explanation for the difference between the results on the TIN
and DEM terrains for prickliness could be attributed to the difference in resolution between the DEMs and TINs used.
The DEMs used consisted of 1.68M cells of 10m size, while the TINs---generated with an error tolerance of 50m---had 1125 vertices on average.
While it would have been interesting to use a higher resolution TIN,
this was not possible due to the high memory usage of the prickliness algorithm.
Another possible explanation for the mismatch between the results for TINs and DEMs may be on the actual definition of prickliness, which is more natural for TINs than for DEMs.
Indeed, it can be seen in the results that the prickliness values for DEMs are much higher than for TINs, which could indicate that the definition is too sensitive to small terrain irregularities.




\section{Conclusion}
\label{section:conclusion}
We established that prickliness is a reasonable measure of potentially high viewshed complexity, at least for TINs, confirming our theoretical results.
Moreover, prickliness shows a much clearer correlation with viewshed complexity than the three other terrain attributes considered.

One aspect worth further investigation is its correlation for DEMs, which seems to be much weaker. One explanation for this might be that the definition of prickliness is more natural for TINs than for DEMs, but there are several other possible explanations, and it would be interesting future work to delve further into this phenomenon. Having established that
prickliness can be a useful terrain attribute, it remains to improve its computation time, so it can be applied to larger terrains in
practice.

Finally, during our work we noticed that several of the terrain
attributes are defined locally, and are parameterized by some
neighborhood size. Following previous work, we aggregated these local
measures into a global measure by averaging the measurements. It may
be worthwhile to investigate different aggregation methods as
well. This also leads to a more general open question on how to
``best'' transform a local terrain measurement into a global one.

\section* {Acknowledgments}

The authors would like to thank Jeff Phillips for a stimulating discussion that, years later, led to the notion of prickliness. They would also like to thank Ramesh K. Jallu for his work on previous versions of this paper, and Hans Raj Tiwary for key observations in Sections~\ref{sub:Algorithm_for_TINs} and~\ref{sub:Lower_bound_2.5D}.

A.A. and M.S.
were supported by the Czech Science Foundation, grant number GJ19-06792Y.  M.L. was partially supported by the Netherlands Organization for Scientific Research (NWO) under project no. 614.001.504. R.S. was supported by project PID2023-150725NB-I00 funded by MICIU/ AEI/ 10.13039/ 501100011033. %
This project has received funding from the European Union's Horizon 2020 research and innovation programme under the Marie Sk\l{}odowska-Curie grant agreement No 734922. The work was partially done while A.A. and M.S. were affiliated with Institute of Computer Science of the Czech Academy of Sciences, with institutional support RVO:67985807.


\bibliographystyle{splncs04}
\bibliography{refs}

\begin{thebibliography}{10}
\providecommand{\url}[1]{\texttt{#1}}
\providecommand{\urlprefix}{URL }
\providecommand{\doi}[1]{https://doi.org/#1}

\bibitem{BergHT10}
de~Berg, M., Haverkort, H.J., Tsirogiannis, C.P.: Visibility maps of realistic
  terrains have linear smoothed complexity. J. Comput. Geom.  \textbf{1}(1),
  57--71 (2010)

\bibitem{CHAMBERLAIN201313}
Chamberlain, B.C., Meitner, M.J.: A route-based visibility analysis for
  landscape management. Landscape and Urban Planning  \textbf{111},  13--24
  (2013)

\bibitem{DBLP:journals/talg/Chan20}
Chan, T.M.: More logarithmic-factor speedups for {3SUM}, (median,
  +)-convolution, and some geometric {3SUM}-hard problems. {ACM} Trans.
  Algorithms  \textbf{16}(1),  7:1--7:23 (2020)

\bibitem{chew1987constrained}
Chew, L.P.: Constrained delaunay triangulations. Algorithmica  \textbf{4}(1),
  97--108 (1989)

\bibitem{Danese2011short}
Danese, M., Nol{\`e}, G., Murgante, B.: Identifying viewshed: New approaches to
  visual impact assessment. In: Geocomputation, Sustainability and
  Environmental Planning, pp. 73--89. Springer, Berlin, Heidelberg (2011)

\bibitem{Dean97}
Dean, D.J.: Improving the accuracy of forest viewsheds using triangulated
  networks and the visual permeability method. Canadian Journal of Forest
  Research  \textbf{27}(7),  969--977 (1997)

\bibitem{Dong2008}
Dong, Y., Tang, G., Zhang, T.: {a Systematic Classification Research of
  Topographic Descriptive Attribute in Digital Terrain Analysis}. The
  International Archives of the Photogrammetry, Remote Sensing and Spatial
  Information Sciences  \textbf{37 B2},  357--362 (2008)

\bibitem{ArcGISViewshed}
{Environmental Systems Research Institute (ESRI)}: How {G}eodesic {V}iewshed
  works.
  \url{https://pro.arcgis.com/en/pro-app/latest/tool-reference/3d-analyst/how-viewshed-2-works.htm},
  accessed: 2022-10-17

\bibitem{ArcGISPro}
{Environmental Systems Research Institute (ESRI)}: Arcgis pro (2.5.1) (05 2020)

\bibitem{ESRITerrain}
{Environmental Systems Research Institute (ESRI)}: Terrain, scale: 10m (02
  2020)

\bibitem{franklin}
Franklin, W.R., Vogt, C.: Multiple observer siting on terrain with
  intervisibility or lo-res data. In: XXth Congress, International Society for
  Photogrammetry and Remote Sensing. pp. 12--23 (2004)

\bibitem{gajentaan1995class}
Gajentaan, A., Overmars, M.H.: On a class of {O}($n^2$) problems in
  computational geometry. Comput. Geom.  \textbf{5}(3),  165--185 (1995)

\bibitem{Goodrich1992Viewshed}
Goodrich, M.T.: {A polygonal approach to hidden-line and hidden-surface
  elimination}. CVGIP: Graphical Models and Image Processing  \textbf{54}(1),
  1--12 (jan 1992)

\bibitem{fishy2014}
Hurtado, F., L\"offler, M., Matos, I., Sacrist\'an, V., Saumell, M., Silveira,
  R.I., Staals, F.: Terrain visibility with multiple viewpoints. International
  Journal of Computational Geometry \& Applications  \textbf{24}(04),  275--306
  (2014)

\bibitem{klms-papgpt-14}
Kammer, F., L{\"o}ffler, M., Mutser, P., Staals, F.: Practical approaches to
  partially guarding a polyhedral terrain. In: Proc. 8th International
  Conference on Geographic Information Science. pp. 318--332. LNCS 8728 (2014)

\bibitem{Kim2004ViewpointLocations}
Kim, Y.H., Rana, S., Wise, S.: Exploring multiple viewshed analysis using
  terrain features and optimisation techniques. Computers \& Geosciences
  \textbf{30}(9),  1019--1032 (2004)

\bibitem{LubiwR-lb}
Lubiw, A., R{\'{a}}cz, A.: A lower bound for the integer element distinctness
  problem. Inf. Comput.  \textbf{94}(1),  83--92 (1991)

\bibitem{Mandelbrot1982}
Mandelbrot, B.B.: {The} fractal geometry of nature. W.H. Freeman, New York
  (1982)

\bibitem{Maynard2014DEMRes}
Maynard, J.J., Johnson, M.G.: {Scale-dependency of LiDAR derived terrain
  attributes in quantitative soil-landscape modeling: Effects of grid
  resolution vs. neighborhood extent}. Geoderma  \textbf{230-231},  29--40
  (2014)

\bibitem{McNab1989TSI}
McNab, W.H.: Terrain shape index: Quantifying effect of minor landforms on tree
  height. Forest Science  \textbf{35},  91--104 (1989)

\bibitem{thesis/gert}
Meijer, G.: Realistic terrain features and the complexity of joint viewsheds.
  Master's thesis, Utrecht University (2020)

\bibitem{MOET200848}
Moet, E., {van Kreveld}, M., {van der Stappen}, A.F.: On realistic terrains.
  Computational Geometry  \textbf{41}(1),  48--67 (2008)

\bibitem{RD-CausesError-07}
Riggs, P.D., Dean, D.J.: An investigation into the causes of errors and
  inconsistencies in predicted viewsheds. Transactions in GIS  \textbf{11}(2),
  175--196 (2007)

\bibitem{Riley1999TRI}
Riley, S.J., DeGloria, S.D., Elliot, R.: A terrain ruggedness index that
  quantifies topographic heterogeneity. Journal of Science  \textbf{5},  23--27
  (1999)

\bibitem{SCHIRPKE20131}
Schirpke, U., Tasser, E., Tappeiner, U.: Predicting scenic beauty of mountain
  regions. Landscape Urban Plan.  \textbf{111},  1--12 (2013)

\bibitem{Taud2005Fractal}
Taud, H., Parrot, J.F.: {Measurement of DEM roughness using the local fractal
  dimension}. G{\'{e}}omorphologie : relief, processus, environnement
  \textbf{11}(4),  327--338 (2005)

\bibitem{CGAL:CGAL}
{The CGAL Project}: CGAL User and Reference Manual. CGAL Editorial Board, ~,
  5.0.2 edn. (2020)

\bibitem{CGAL:2dArrangements}
Wein, R., Berberich, E., Fogel, E., Halperin, D., Hemmer, M., Salzman, O.,
  Zukerman, B.: 2d arrangements. In: CGAL User and Reference Manual. CGAL
  Editorial Board, ~, 5.0.2 edn. (2020)

\bibitem{Yao-lb}
Yao, A.C.: Lower bounds for algebraic computation trees with integer inputs.
  {SIAM} J. Comput.  \textbf{20}(4),  655--668 (1991)

\bibitem{Zhang1994DemRes}
Zhang, W., Montgomery, D.R.: {Digital elevation model grid size, landscape
  representation, and hydrologic simulations}. Water Resources Research
  \textbf{30}(4),  1019--1028 (1994)

\bibitem{thesisRodrigo}
Silveira, R.I.: Optimization of polyhedral terrains. Ph.D. thesis, Utrecht University (2009)

\end{thebibliography}
\newpage

\appendix
\begin{appendix}
\section{Additional plots} \label{sec:app}

In this section, we provide the scatter plots for the placement strategies of choosing ``high'' points and for choosing ``low'' points. Plots for TINs are displayed in Fig.~\ref{fig:TIN_results_highest} and~\ref{fig:TIN_results_lowest}, while plots for DEMs are shown in Fig.~\ref{fig:DEM_results_highest} and~\ref{fig:DEM_results_lowest}.

\begin{figure}[h]
  \hspace{-1em}
  \begin{tabularx}{\linewidth}{cccc}
    ~~~~~Prickliness & ~~~FD & ~~TRI & ~~TSI \\
      \includegraphics[height=2.2cm]{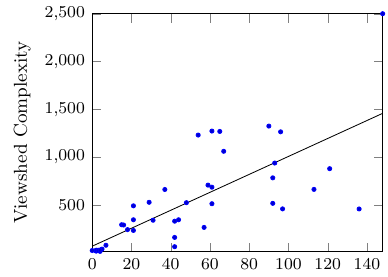}
	& \includegraphics[height=2.2cm]{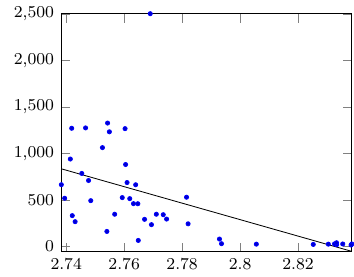}
	& \includegraphics[height=2.2cm]{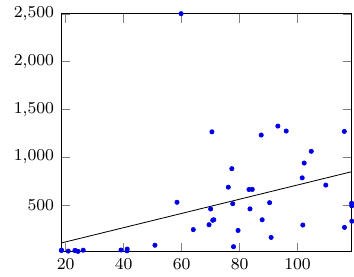}
	& \includegraphics[height=2.2cm]{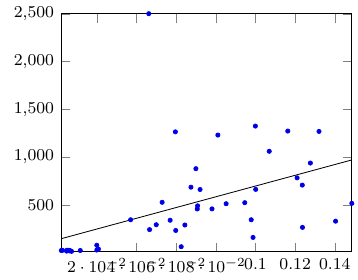}
	\\	 
	  \includegraphics[height=2.2cm]{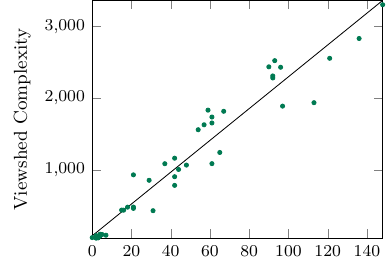}
	& \includegraphics[height=2.2cm]{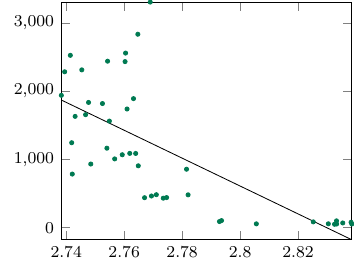}
	& \includegraphics[height=2.2cm]{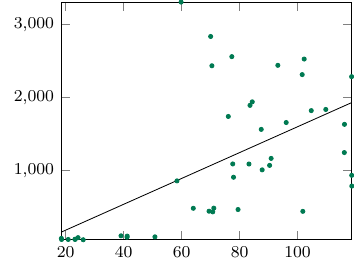}
	& \includegraphics[height=2.2cm]{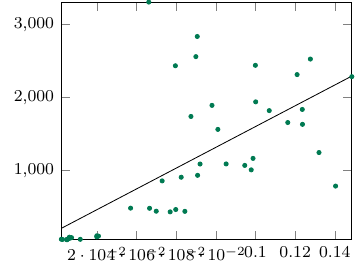}	
  \end{tabularx}
  \caption{The viewshed complexity on a TIN. First row: single highest
    viewpoint. Second row: common viewshed of multiple (nine, selected
    from a $3 \times 3$ overlay) highest viewpoints.}
  \label{fig:TIN_results_highest}
\end {figure}

\begin{figure}[h]
  \hspace{-1em}
  \begin{tabularx}{\linewidth}{cccc}
    ~~~~~Prickliness & ~~~FD & ~~TRI & ~~TSI \\
      \includegraphics[height=2.2cm]{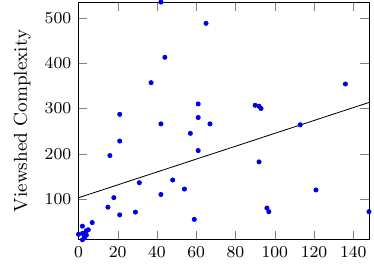}
	& \includegraphics[height=2.2cm]{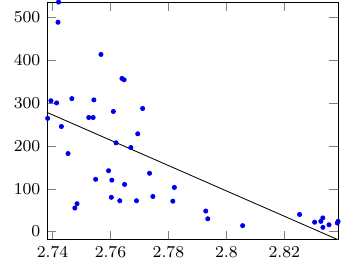}
	& \includegraphics[height=2.2cm]{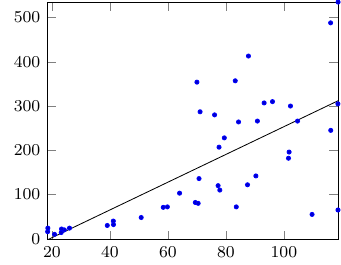}
	& \includegraphics[height=2.2cm]{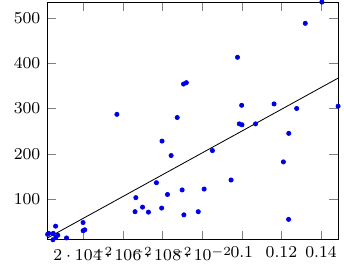}
	\\	 
	  \includegraphics[height=2.2cm]{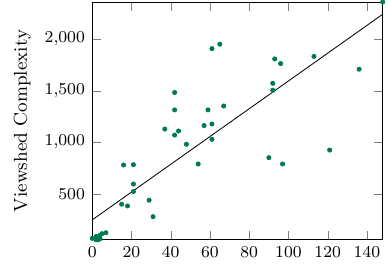}
	& \includegraphics[height=2.2cm]{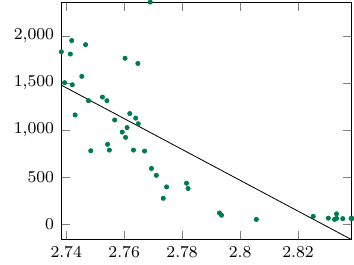}
	& \includegraphics[height=2.2cm]{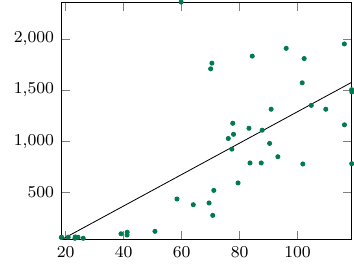}
	& \includegraphics[height=2.2cm]{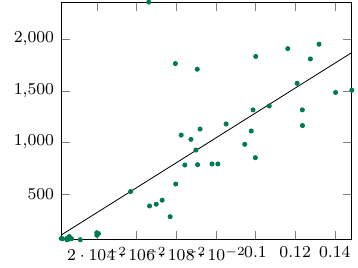}	
  \end{tabularx}
  \caption{The viewshed complexity on a TIN. First row: single lowest
    viewpoint. Second row: common viewshed of multiple (nine, selected
    from a $3 \times 3$ overlay) lowest viewpoints.}
  \label{fig:TIN_results_lowest}
\end {figure}

\begin{figure}[tb]
  \centering
   \hspace{-1em}
  \begin{tabularx}{\linewidth}{cccc}
    ~~~~~Prickliness & ~~~FD & ~~TRI & ~~TSI \\
    \includegraphics[height=2.7cm,valign=t]{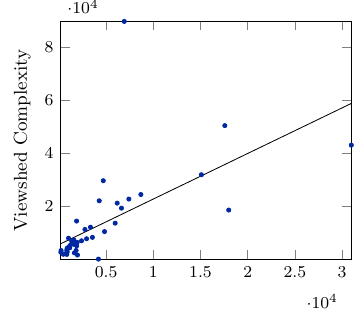}
  & \includegraphics[height=2.5cm,valign=t]{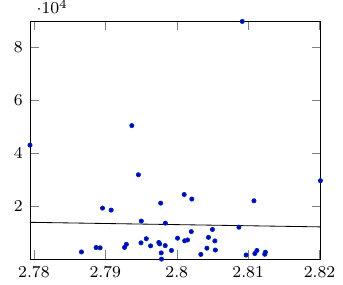}
  & \includegraphics[height=2.5cm,valign=t]{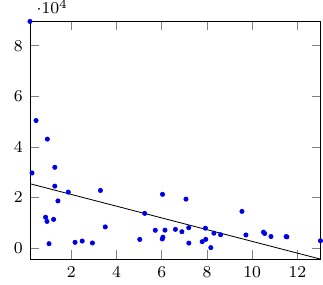}
  & \includegraphics[height=2.5cm,valign=t]{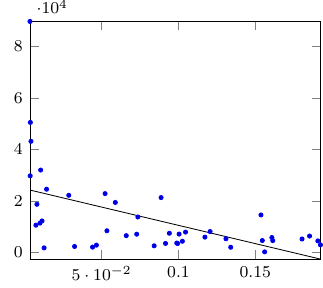}
  \vspace{0.2cm}
  \\
	\includegraphics[height=2.7cm,valign=t]{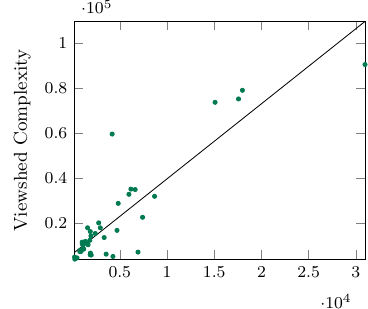}
  & \includegraphics[height=2.5cm,valign=t]{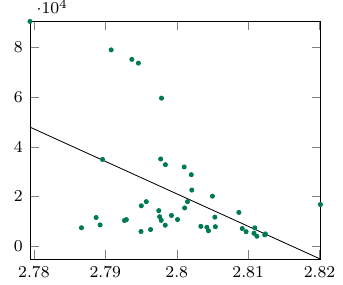}
  & \includegraphics[height=2.5cm,valign=t]{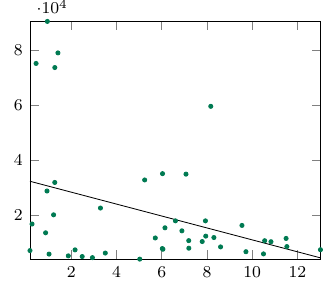}
  & \includegraphics[height=2.5cm,valign=t]{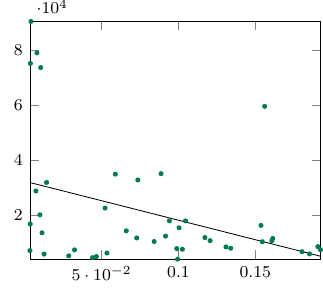}
  \vspace{0.2cm}
  \end{tabularx}
  \caption{The viewshed complexity on a DEM. First row: single highest
    viewpoint. Second row: common viewshed of multiple (nine, selected
    from a $3 \times 3$ overlay) highest viewpoints. 
}
  \label{fig:DEM_results_highest}
\end {figure}

\begin{figure}[tb]
  \centering
   \hspace{-1em}
  \begin{tabularx}{\linewidth}{cccc}
    ~~~~~Prickliness & ~~~FD & ~~TRI & ~~TSI \\
    \includegraphics[height=2.7cm,valign=t]{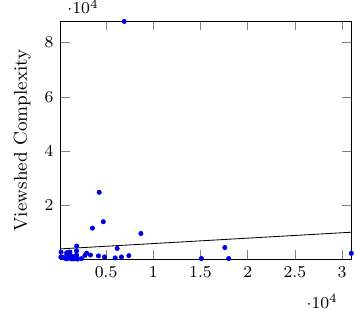}
  & \includegraphics[height=2.5cm,valign=t]{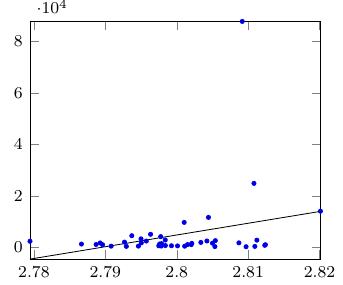}
  & \includegraphics[height=2.5cm,valign=t]{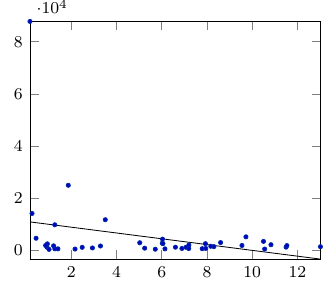}
  & \includegraphics[height=2.5cm,valign=t]{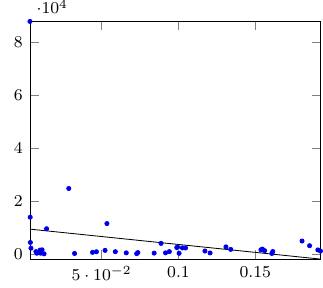}
  \vspace{0.2cm}
  \\
	\includegraphics[height=2.7cm,valign=t]{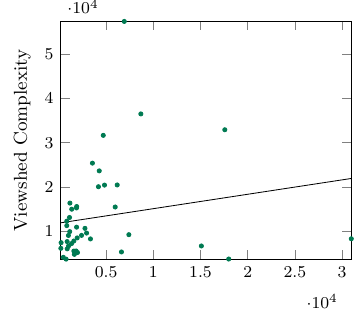}
  & \includegraphics[height=2.5cm,valign=t]{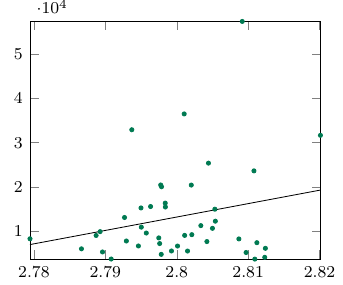}
  & \includegraphics[height=2.5cm,valign=t]{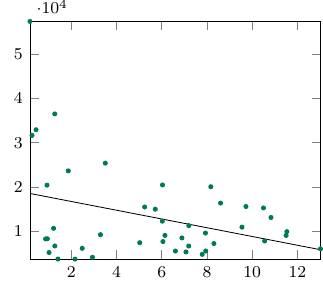}
  & \includegraphics[height=2.5cm,valign=t]{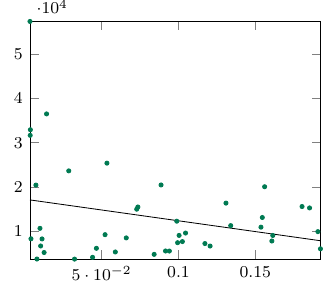}  \vspace{0.2cm}
  \end{tabularx}
  \caption{The viewshed complexity on a DEM. First row: single lowest
    viewpoint. Second row: common viewshed of multiple (nine, selected
    from a $3 \times 3$ overlay) lowest viewpoints. 
}
  \label{fig:DEM_results_lowest}
\end {figure}
\end{appendix}

\end{document}